\theoremstyle{plain}
\newtheorem{theorem}{Theorem}[section]
\newtheorem{lemma}[theorem]{Lemma}
\newtheorem{corollary}[theorem]{Corollary}
\newtheorem{definition}[theorem]{Definition}
\newtheorem{replacements}{Replacements}
\newcommand{\abs}[1]{\left\lvert#1\right\rvert}
\DeclareMathOperator{\Dom}{dom}
\newcommand{\N}{\mathbb{N}}
\newcommand{\Q}{\mathbb{Q}}
\newcommand{\R}{\mathbb{R}}
\newcommand{\X}{\{0,1\}^*}
\newcommand{\K}{H}
\newcommand{\noi}{\noindent}
\title{\textbf{
Fixed point theorems on partial randomness%
\thanks{
A preliminary version of this paper appeared
in the Proceedings of
the Symposium on Logical Foundations of Computer Science 2009 (LFCS'09),
S.~Artemov and A.~Nerode (Eds.),
Lecture Notes in Computer Science, Springer-Verlag, Vol.5407,
pp.422--440,
January 3-6, 2009,
Deerfield Beach, Florida, USA.
}
}}
\author{
Kohtaro Tadaki\\
\\
Research and Development Initiative, Chuo University\\
CREST, JST\\
1--13--27 Kasuga, Bunkyo-ku, Tokyo 112-8551, Japan\\
E-mail: tadaki@kc.chuo-u.ac.jp\\
http://www2.odn.ne.jp/tadaki/
}
\date{}
\begin{document}

\maketitle

\begin{quotation}
\noi\textbf{Abstract.}
In our former work
[K.~Tadaki, Local Proceedings of CiE 2008, pp.~425--434, 2008],
we developed
a statistical mechanical interpretation of algorithmic information theory
by introducing the notion of thermodynamic quantities at temperature $T$,
such as free energy $F(T)$, energy $E(T)$,
and statistical mechanical entropy $S(T)$,
into
the theory.
These quantities are real functions of real argument $T>0$.
We then discovered that, in the interpretation,
the temperature $T$ equals to the partial randomness of the values of
all these thermodynamic quantities,
where the notion of partial randomness is
a stronger representation of
the compression rate
by
program-size complexity.
Furthermore,
we showed that this situation holds for the temperature itself
as a thermodynamic quantity.
Namely,
the computability of the value of
partition function $Z(T)$
gives a sufficient condition for
$T\in(0,1)$
to be a fixed point on partial randomness.
In this paper,
we show that
the computability of each of all the thermodynamic quantities above
gives the sufficient condition also.
Moreover, we show
that the computability of $F(T)$ gives
completely different fixed points from the computability of $Z(T)$.
\end{quotation}

\vspace{1mm}

\begin{quotation}
\noi\textit{Key words\/}:
algorithmic randomness,
fixed point theorem,
partial randomness,
Chaitin $\Omega$ number,
algorithmic information theory,
thermodynamic quantities
\end{quotation}

\begin{quotation}
\noi\textbf{AMS subject classifications (2000)}\;
68Q30, 26E40, 03D80, 82B30, 82B03
\end{quotation}
%
%
%
%
%
%

\section{Introduction}

Algorithmic information theory (AIT, for short) is a framework
for applying
information-theoretic and probabilistic ideas to recursive function theory.
One of the primary concepts of AIT is the \textit{program-size complexity}
(or \textit{Kolmogorov complexity}) $\K(s)$ of a finite binary string $s$,
which is defined as the length of the shortest binary
program
for the universal self-delimiting Turing machine $U$ to output $s$.
By the definition,
$\K(s)$
is thought to represent
the degree of randomness of a finite binary string $s$.
In particular,
the notion of program-size complexity plays a crucial role in
characterizing the \textit{randomness} of an infinite binary string,
or equivalently, a real number.

In
\cite{T08CiE}
we developed
a statistical mechanical interpretation of AIT.
In the development
we introduced especially the notion of
\textit{thermodynamic quantities},
such as partition function $Z(T)$, free energy $F(T)$, energy $E(T)$,
statistical mechanical entropy $S(T)$, and specific heat $C(T)$,
into AIT.
These
quantities are real numbers which depend
on \textit{temperature} $T$,
any positive real number.
We then proved that
if the temperature $T$ is a computable real number with $0<T<1$
then, for each of these thermodynamic quantities,
the partial randomness of its value equals to $T$,
where the notion of \textit{partial randomness} is
a stronger representation of the compression rate
by means of program-size complexity.
Thus,
the temperature $T$ plays a role as
the partial randomness of all the thermodynamic quantities
in the statistical mechanical interpretation of AIT.
%
In \cite{T08CiE}
we further showed that
the temperature $T$ plays a role as the partial randomness of
the temperature $T$ itself,
which is a thermodynamic quantity of itself.
Namely,
we proved \textit{the fixed point theorem on partial randomness},%
\footnote{
The fixed point theorem on partial randomness is called
a fixed point theorem on compression rate in \cite{T08CiE}.}
which states that, for every $T\in(0,1)$,
if the value of partition function $Z(T)$ at temperature $T$
is a computable real number,
then the partial randomness of $T$ equals to $T$,
and therefore the compression rate of $T$ equals to $T$,
i.e.,
$\lim_{n\to\infty}H(T_n)/n=T$,
where $T_n$ is the first $n$ bits of the base-two expansion of $T$.

In this paper,
we show that
a fixed point theorem
of the same form
as for $Z(T)$
holds also for each of free energy $F(T)$, energy $E(T)$, and
statistical mechanical entropy $S(T)$.
Moreover,
based on
the statistical mechanical relation $F(T)=-T\log_2 Z(T)$,
we show that the computability of $F(T)$ gives
completely different fixed points from the computability of $Z(T)$.

The paper is organized as follows.
We begin in Section \ref{preliminaries} with
some preliminaries to
AIT
and partial randomness.
In Section \ref{tcr},
we review the previous results \cite{T08CiE} on
the statistical mechanical interpretation of AIT
and the fixed point theorem by $Z(T)$,
which is given as Theorem~\ref{main} in the present paper.
Our main results;
the fixed point theorems
by $F(T)$, $E(T)$, and $S(T)$, are
presented in Section \ref{fpts},
and their proofs are completed in Section \ref{proof}.
In the last section,
we investigate some properties of the sufficient conditions
for $T$ to be a fixed point in the fixed point theorems.

\section{Preliminaries}
\label{preliminaries}

\subsection{Basic notation}
\label{basic notation}

We start with some notation about numbers and strings
which will be used in this paper.
$\N=\left\{0,1,2,3,\dotsc\right\}$ is the set of natural numbers,
and $\N^+$ is the set of positive integers.
$\Q$ is the set of rational numbers, and
$\R$ is the set of real numbers.
Let $f\colon S\to\R$ with $S\subset\R$.
We say that $f$ is \textit{increasing} (resp., \textit{decreasing})
if $f(x)<f(y)$ (resp., $f(x)>f(y)$) for all $x,y\in S$ with $x<y$.
We denote by $f'$ the derived function of $f$.

Normally, $o(n)$ denotes any
function $f\colon \N^+\to\R$ such
that $\lim_{n \to \infty}f(n)/n=0$.
On the other hand,
$O(1)$ denotes any
function $g\colon \N^+\to\R$ such that
there is $C\in\R$ with the property that
$\abs{g(n)}\le C$ for all $n\in\N^+$.

$\X=
\left\{
  \lambda,0,1,00,01,10,11,000,\dotsc
\right\}$
is the set of finite binary strings,
where $\lambda$ denotes the \textit{empty string}.
For any $s \in \X$, $\abs{s}$ is the \textit{length} of $s$.
A subset $S$ of $\X$ is called
\textit{prefix-free}
if no string in $S$ is a prefix of another string in $S$.
For any partial function $f$,
the domain of definition of $f$ is denoted by $\Dom f$.
We write ``r.e.'' instead of ``recursively enumerable.''

Let $\alpha$ be an arbitrary real number.
$\lfloor \alpha \rfloor$ is the greatest integer less than or equal to $\alpha$,
and $\lceil \alpha \rceil$ is the smallest integer greater than or equal to $\alpha$.
For any $n\in\N^+$,
we denote by $\alpha_n\in\X$
the first $n$ bits of the base-two expansion of
$\alpha - \lfloor \alpha \rfloor$ with infinitely many zeros.
For example,
in the case of $\alpha=5/8$,
$\alpha_6=101000$.

We say that a real number $\alpha$ is \textit{computable} if
there exists a total recursive function $f\colon\N^+\to\Q$ such that
$\abs{\alpha-f(n)} < 1/n$ for all $n\in\N^+$.
We say that $\alpha$ is \textit{left-computable} if
there exists a total recursive function $g\colon\N^+\to\Q$ such that
$g(n)\le\alpha$ for all $n\in\N^+$ and $\lim_{n\to\infty} g(n)=\alpha$.
On the other hand,
we say that
a real number
$\alpha$ is \textit{right-computable} if
$-\alpha$ is left-computable.
%
The following (i) and (ii) then hold:
\begin{enumerate}
  \item
    A real number
    $\alpha$ is computable if and only if
    $\alpha$ is both left-computable and right-computable.
  \item
    A real number
    $\alpha$ is right-computable if and only if
    the set $\{\,r\in\Q\mid \alpha<r\,\}$ is r.e.
\end{enumerate}
See e.g.~Weihrauch \cite{W00}
for the detail of the treatment of
the computability of real numbers.

\subsection{Algorithmic information theory}
\label{ait}

In the following
we concisely review some definitions and results of
algorithmic information theory
\cite{C75,C87a,C87b}.
A \textit{computer} is a partial recursive function
$C\colon \X\to \X$
such that
$\Dom C$ is a prefix-free set.
For each computer $C$ and each $s \in \X$,
$\K_C(s)$ is defined by
$\K_C(s) =
\min
\left\{\,
  \abs{p}\,\big|\;p \in \X\>\&\>C(p)=s
\,\right\}$
(may be $\infty$).
A computer $U$ is said to be \textit{optimal} if
for each computer $C$ there exists $d\in\N$
with the following property;
if $C(p)$ is defined, then there is a $p'$ for which
$U(p')=C(p)$ and $\abs{p'}\le\abs{p}+d$.
It is easy to see that there exists an optimal computer.
Note
that the class of optimal computers equals to
the class of functions which are computed
by \textit{universal self-delimiting Turing machines}
(see Chaitin \cite{C75} for the detail).
We choose a particular optimal computer $U$
as the standard one for use,
and define $\K(s)$ as $\K_U(s)$,
which is referred to as
the \textit{program-size complexity} of $s$,
the \textit{information content} of $s$, or
the \textit{Kolmogorov complexity} of $s$
\cite{G74,L74,C75}.
It follows that
for every computer $C$ there exists $d\in\N$ such that,
every $s\in\X$,
\begin{equation}
  H(s) \le H_C(s) + d. \label{eq: k}
\end{equation}
Based on this we can show
that there exists $c\in\N$ such that,
for every $s \neq \lambda$,
\begin{equation}
  H(s)\le\abs{s}+2\log_2\abs{s}+c. \label{eq: fas}
\end{equation}
An element of $\Dom U$ is called a \textit{program for $U$}.

\textit{Chaitin's halting probability} $\Omega$ is defined by
$\Omega=\sum_{p\in\Dom U}2^{-\abs{p}}$.
For any $\alpha\in\R$,
we say that $\alpha$ is \textit{weakly Chaitin random}
if there exists $c\in\N$ such that
$n-c\le \K(\alpha_n)$ for all $n\in\N^+$
\cite{C75,C87b}.
Then
Chaitin
\cite{C75} showed that $\Omega$ is weakly Chaitin random.
For any $\alpha\in\R$,
we say that $\alpha$ is
\textit{Chaitin random}
if $\lim_{n\to \infty} \K(\alpha_n)-n=\infty$ \cite{C75,C87b}.
It is then shown that,
for every $\alpha\in\R$,
$\alpha$ is weakly Chaitin random if and only if
$\alpha$ is Chaitin random
(see
Chaitin
\cite{C87b} for the proof and historical detail).
Thus $\Omega$ is Chaitin random.

\subsection{Partial randomness}
\label{partial}

In the works \cite{T99,T02},
we generalized the notion of
the randomness of a real number
so that \textit{the degree of the randomness},
which is often referred to as
\textit{the partial randomness} recently
\cite{CST06,RS05,CS06},
can be characterized by a real number $T$
with $0\le T\le 1$ as follows.


\begin{definition}[weak Chaitin $T$-randomness]
Let
$T\in\R$ with
$T\ge 0$.
For any $\alpha\in\R$,
we say that $\alpha$ is \textit{weakly Chaitin $T$-random} if
there exists $c\in\N$ such that
$Tn-c \le H(\alpha_n)$
for all $n\in\N^+$.
\qed
\end{definition}

\begin{definition}[$T$-compressibility]
Let
$T\in\R$ with
$T\ge 0$.
For any $\alpha\in\R$,
we say that $\alpha$ is \textit{$T$-compressible} if
$H(\alpha_n)\le Tn+o(n)$,
which is equivalent to
$\limsup_{n \to \infty}H(\alpha_n)/n\le T$.
\qed
\end{definition}

In the case of $T=1$,
the weak Chaitin $T$-randomness results in the weak Chaitin randomness.
For every $T\in[0,1]$ and every $\alpha\in\R$,
if $\alpha$ is weakly Chaitin $T$-random and $T$-compressible,
then
\begin{equation}\label{compression-rate}
  \lim_{n\to \infty} \frac{H(\alpha_n)}{n} = T.
\end{equation}
The
left-hand side of \eqref{compression-rate}
is referred to as the \textit{compression rate} of
a real number $\alpha$ in general.
Note, however, that \eqref{compression-rate}
does not necessarily imply that $\alpha$ is weakly Chaitin $T$-random.
Thus, the notion of partial randomness is
a stronger representation of compression rate.

\begin{definition}[Chaitin $T$-randomness, Tadaki \cite{T99,T02}]
  Let $T\in\R$ with $T\ge 0$.
  For any $\alpha \in\R$,
  we say that $\alpha$ is \textit{Chaitin $T$-random} if
  $\lim_{n\to \infty} H(\alpha_n)-Tn = \infty$.
  \qed
\end{definition}

In the case of $T=1$,
the Chaitin $T$-randomness results in the Chaitin randomness.
Obviously,
for every $T\in[0,1]$ and every $\alpha\in\R$,
if $\alpha$ is Chaitin $T$-random,
then $\alpha$ is weakly Chaitin $T$-random.
However,
in 2005 Reimann and Stephan \cite{RS05} showed that,
in the case of $T<1$,
the converse does not necessarily hold.
This contrasts with the
equivalence
between
the weak Chaitin randomness and the Chaitin randomness,
each of which corresponds to the case of $T=1$.
%
%
Recently, Kjos-Hanssen \cite{K08b}
showed
that the distinction between
the weak Chaitin $T$-randomness and the Chaitin $T$-randomness
has important applications to the research on
the notion of $T$-capacitability and its related notions
\cite{K08a,R08}.

\section{The previous results}
\label{tcr}

In this section,
we review some results of the statistical mechanical interpretation of
AIT,
developed by our former work \cite{T08CiE}.
We first introduce the notion of thermodynamic quantities into AIT
in the following manner.

In statistical mechanics,
the partition function $Z_{\mathrm{sm}}(T)$,
free energy $F_{\mathrm{sm}}(T)$,
energy $E_{\mathrm{sm}}(T)$,
entropy $S_{\mathrm{sm}}(T)$, and
specific heat $C_{\mathrm{sm}}(T)$
at temperature $T$
are given
as follows:
\begin{equation}\label{tdqsm}
\begin{split}
  Z_{\mathrm{sm}}(T)
  &=\sum_{x\in X}e^{-\frac{E_x}{k_{\mathrm{B}}T}},
  \hspace*{28mm}
  F_{\mathrm{sm}}(T)
  =-k_{\mathrm{B}}T\ln Z_{\mathrm{sm}}(T), \\
  E_{\mathrm{sm}}(T)
  &=\frac{1}{Z_{\mathrm{sm}}(T)}\sum_{x\in X}E_xe^{-\frac{E_x}{k_{\mathrm{B}}T}},
  \hspace*{9mm}
  S_{\mathrm{sm}}(T)
  =\frac{E_{\mathrm{sm}}(T)-F_{\mathrm{sm}}(T)}{T}, \\
  &\hspace*{25mm}
  C_{\mathrm{sm}}(T)
  =\frac{d}{dT}E_{\mathrm{sm}}(T),
\end{split}
\end{equation}
where $X$ is a complete set of energy eigenstates of
a quantum system
and $E_x$ is the energy of an energy eigenstate $x$.
The constant $k_{\mathrm{B}}$ is called the Boltzmann Constant,
and the $\ln$ denotes the natural logarithm.%
\footnote{
For the thermodynamic quantities in statistical mechanics,
see e.g.~Chapter 16 of \cite{C85}
and Chapter 2 of \cite{TKS92}.
To be precise,
the partition function is not a thermodynamic quantity
but a statistical mechanical quantity.
}

We introduce the notion of thermodynamic quantities into AIT
by performing Replacements~\ref{CS06} below
for the thermodynamic quantities \eqref{tdqsm} in statistical mechanics.

\begin{replacements}\label{CS06}\hfill
\begin{enumerate}
  \item Replace the complete set $X$ of energy eigenstates $x$
    by the set $\Dom U$ of all programs $p$ for $U$.
  \item Replace the energy $E_x$ of an energy eigenstate $x$
    by the length $\abs{p}$ of a program $p$.
  \item Set the Boltzmann Constant $k_{\mathrm{B}}$ to $1/\ln 2$.\qed
\end{enumerate}
\end{replacements}

For that purpose,
we first choose a particular recursive enumeration
$p_1,p_2,p_3,p_4, \dotsc$ of the infinite r.e.~set $\Dom U$
as the standard one for use throughout
the rest of this paper.%
\footnote{
Actually,
the enumeration $\{p_i\}$ can be chosen quite arbitrarily,
and the results of this paper are independent of the choice of $\{p_i\}$.
This is because
the sum $\sum_{i=1}^k 2^{-\abs{p_i}/T}$ and
$\sum_{i=1}^k \abs{p_i}2^{-\abs{p_i}/T}$
in Definition~\ref{tdqait}
are positive term series
and converge as $k\to\infty$ for every $T\in(0,1)$
(see Lemma~\ref{uniform} (i) below).
For the sake of convenience,
however,
we choose $\{p_i\}$ to be a recursive enumeration of $\Dom U$
in this paper.
}
Then,
motivated by the
formulae \eqref{tdqsm}
and taking into account Replacements~\ref{CS06},
we introduce the notion of thermodynamic quantities into AIT
as follows.%

\begin{definition}[thermodynamic quantities in AIT,
\cite{T08CiE}]\label{tdqait}
Let $T$ be any real number with $T>0$.
\begin{enumerate}
\item
  The \textit{partition function} $Z(T)$
  at temperature $T$
  is defined
  as $\lim_{k\to\infty} Z_k(T)$
  where
  \begin{equation*}
    Z_k(T)=\sum_{i=1}^k 2^{-\frac{\abs{p_i}}{T}}.
  \end{equation*}
\item
  The \textit{free energy} $F(T)$
  at temperature $T$
  is defined
  as $\lim_{k\to\infty} F_k(T)$
  where
  \begin{equation*}
    F_k(T)=-T\log_2 Z_k(T).
  \end{equation*}
\item
  The \textit{energy} $E(T)$
  at temperature $T$
  is defined
  as $\lim_{k\to\infty} E_k(T)$
  where
  \begin{equation*}
    E_k(T)
    =\frac{1}{Z_k(T)}\sum_{i=1}^k \abs{p_i}2^{-\frac{\abs{p_i}}{T}}.
  \end{equation*}
\item
  The \textit{statistical mechanical entropy} $S(T)$
  at temperature $T$
  is defined as $\lim_{k\to\infty} S_k(T)$
  where
  \begin{equation*}
    S_k(T)=\frac{E_k(T)-F_k(T)}{T}.
  \end{equation*}
\item
  The \textit{specific heat} $C(T)$
  at temperature $T$
  is defined
  as $\lim_{k\to\infty} C_k(T)$
  where $C_k(T)=E_k'(T)$,
  the derived function of $E_k(T)$.\qed
\end{enumerate}
\end{definition}


Note that $Z(1)=\Omega$ in particular.
Then
Theorems~\ref{cprpffe} and \ref{cpreesh} below
hold for these thermodynamic quantities in AIT.

\begin{theorem}[properties of $Z(T)$ and $F(T)$,
\cite{T99,T02,T08CiE}]\label{cprpffe}
Let $T\in\R$.
\begin{enumerate}
  \item If $0<T\le 1$ and $T$ is computable,
    then each of $Z(T)$ and $F(T)$ converges
    and is
    weakly Chaitin $T$-random and $T$-compressible.
  \item If $1<T$,
    then $Z(T)$ and $F(T)$ diverge to $\infty$ and $-\infty$,
    respectively.\qed
\end{enumerate}
\end{theorem}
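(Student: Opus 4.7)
The plan is to prove all four claims for $Z(T)$ directly, then transfer them to $F(T)$ via the identity $F(T)=-T\log_2 Z(T)$, exploiting that $T$ is computable.

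For part (i), I would first settle convergence on $(0,1]$. Since $\Dom U$ is prefix-free, Kraft's inequality gives $\sum_{p\in\Dom U}2^{-\abs{p}}\le 1$, and for $T\in(0,1]$ the estimate $2^{-\abs{p}/T}\le 2^{-\abs{p}}$ yields $Z(T)\le\Omega\le 1$; in particular $Z(T)\in(0,1)$, so $F(T)=-T\log_2 Z(T)$ is well defined and finite. The weak Chaitin $T$-randomness of $Z(T)$ is then obtained by generalizing Chaitin's proof for $\Omega$: from $Z(T)_n$, together with the computable constant $T$, one enumerates $\Dom U$ until the partial sum $Z_k(T)$ exceeds $0.Z(T)_n$; since $Z(T)-0.Z(T)_n<2^{-n}$, the tail $Z(T)-Z_k(T)$ is also below $2^{-n}$, so every $q\in\Dom U\setminus\{p_1,\dotsc,p_k\}$ satisfies $2^{-\abs{q}/T}<2^{-n}$, i.e.\ $\abs{q}>Tn$. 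Thus $Z(T)_n$ determines the halting problem for all programs of length at most $\lfloor Tn\rfloor$, and a standard counter-diagonal argument produces from any description of $Z(T)_n$ a string of $H$-complexity exceeding $\lfloor Tn\rfloor$, forcing $H(Z(T)_n)\ge Tn-O(1)$.

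For $T$-compressibility I would use that $Z(T)$ is left-computable when $T$ is computable: an encoding of the halting problem up to length $\lfloor Tn\rfloor+O(\log n)$ fits in $Tn+o(n)$ bits and lets one reconstruct $Z(T)$ to precision $2^{-n}$, hence its first $n$ bits, giving $H(Z(T)_n)\le Tn+o(n)$. Both bounds then transfer to $F(T)$ by noting that $Z(T)$ lies in a fixed subinterval $[a,b]\subset(0,1)$ on which $z\mapsto-T\log_2 z$ is bi-Lipschitz with computable Lipschitz constants; thus $2^{-n}$-approximations of $F(T)$ and $Z(T)$ are interconvertible up to an $O(1)$ shift in $n$, and the lower and upper bounds carry over without change in form.

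Part (ii) is a counting argument using \eqref{eq: fas}: every string $s$ of length $m$ admits a program in $\Dom U$ of length at most $m+2\log_2 m+c$, and distinct strings correspond to distinct such programs, so the number of programs of length at most $L$ is at least $2^{L-2\log_2 L-O(1)}$. Summing only these contributions yields
\begin{equation*}
Z(T)\ge 2^{L-2\log_2 L-O(1)}\cdot 2^{-L/T}=2^{L(1-1/T)-2\log_2 L-O(1)},
\end{equation*}
which tends to infinity as $L\to\infty$ whenever $T>1$, since then $1-1/T>0$. The divergence $F(T)\to-\infty$ is immediate from $F(T)=-T\log_2 Z(T)$.

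The main obstacle I expect is the bi-Lipschitz transfer from $Z(T)$ to $F(T)$: to preserve weak Chaitin $T$-randomness one needs the loss to be $O(1)$ rather than $O(\log n)$, and this requires combining the bi-Lipschitz estimate with the left-computability of $Z(T)$, so that an $O(1)$-precision approximation of $F(T)$ still pins down the halting problem at length $\lfloor Tn\rfloor-O(1)$.
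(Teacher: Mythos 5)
The paper does not actually prove Theorem~\ref{cprpffe} here (it is quoted from the earlier works), so I am judging your outline against the standard argument and against the analogous machinery the paper does spell out, namely the proof of Theorem~\ref{gfpTc2}. Most of your plan is sound: the convergence of $Z(T)$ via $2^{-|p|/T}\le 2^{-|p|}$ and Kraft, the generalization of Chaitin's lower-bound argument giving $H(Z(T)_n)\ge Tn-O(1)$, the transfer to $F(T)=-T\log_2 Z(T)$ by a computable bi-Lipschitz change of variable (legitimate here precisely because $T$ is assumed computable — note this shortcut is unavailable in the fixed point theorems, where $T$ is not assumed computable, which is why the paper handles $F_k$ directly there), and the counting argument for part~(ii). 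One minor caveat in the randomness step: the search for $k$ with $Z_k(T)>0.Z(T)_n$ compares a computable real with a rational and terminates only if $Z(T)>0.Z(T)_n$ strictly; search instead for $Z_k(T)>0.Z(T)_n-2^{-n}$, which costs only a constant.

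The genuine gap is in the $T$-compressibility of $Z(T)$. You claim that the halting problem restricted to programs of length at most $L=\lfloor Tn\rfloor+O(\log n)$ determines $Z(T)$ to precision $2^{-n}$. It does not: the unknown tail is $\sum_{|p|>L}2^{-|p|/T}=\sum_{|p|>L}2^{-|p|}\,2^{(1-1/T)|p|}\le 2^{-(1/T-1)L}$, and this bound is essentially sharp (concentrate the remaining Kraft weight on length $L+1$). With $L\approx Tn$ this is only $2^{-(1-T)n}$, vastly larger than $2^{-n}$ for $T$ near $1$; forcing the tail below $2^{-n}$ by a length cutoff requires $L\approx Tn/(1-T)$, which proves only $T/(1-T)$-compressibility and is vacuous for $T\ge 1/2$. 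The underlying problem is that knowing which \emph{short} programs halt says nothing about how much weight the \emph{long} programs contribute. The standard repair — exactly the device used in the paper's proof of Theorem~\ref{gfpTc2} — is to condition on an initial segment of $\Omega=Z(1)$ itself rather than on a length-truncated halting set: from $\Omega_{\lceil Tn\rceil}$ one effectively finds $k$ with $\sum_{i>k}2^{-|p_i|}<2^{-Tn}$, and the inequality $\sum_i x_i^{z}\le\left(\sum_i x_i\right)^{z}$ for $z=1/T\ge 1$ then gives $\sum_{i>k}2^{-|p_i|/T}<2^{-n}$, so $Z_k(T)$ approximates $Z(T)$ to within $2^{-n}$ and $H(Z(T)_n)\le Tn+O(\log n)$. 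With that substitution your outline goes through, and the bi-Lipschitz transfer to $F(T)$ is unaffected.
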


\begin{theorem}[properties of $E(T)$, $S(T)$, and $C(T)$,
\cite{T08CiE}]\label{cpreesh}
Let $T\in\R$.
\begin{enumerate}
  \item If $0<T<1$ and $T$ is computable,
    then each of $E(T)$, $S(T)$, and $C(T)$ converges
    and is
    Chaitin $T$-random and $T$-compressible.
  \item If $1\le T$,
    then both $E(T)$ and $S(T)$ diverge to $\infty$.
    In the case of $T=1$, $C(T)$ diverges to $\infty$.%
\footnote{
It is still open whether $C(T)$ diverges or not
in the case of $T>1$.
}\qed
\end{enumerate}
\end{theorem}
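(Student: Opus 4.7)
The plan is to handle convergence, $T$-compressibility, and the stronger Chaitin $T$-randomness separately for part (i), and the divergence assertions for part (ii). For convergence in part (i), fix a computable $T'$ with $T<T'<1$. By Theorem~\ref{cprpffe}, $Z(T')<\infty$. For every fixed $k \ge 0$, the factor $|p_i|^k 2^{-|p_i|(1/T - 1/T')}$ is bounded uniformly in $i$, so $\sum_i |p_i|^k 2^{-|p_i|/T}$ converges for $k=0,1,2$. This yields convergence of $E(T)$ (the $k=1$ case) and of $C(T)=\lim C_k(T)$, since direct differentiation gives $C_k(T)=(\ln 2/T^2)[\sum_i|p_i|^2 2^{-|p_i|/T}/Z_k(T)-E_k(T)^2]$, requiring the $k=2$ case. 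Combined with Theorem~\ref{cprpffe}, this also gives convergence of $S(T)=(E(T)-F(T))/T$.

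For $T$-compressibility, approximating $E(T)$ within $2^{-n}$ reduces to enumerating all halting programs of length at most some $L(n) = Tn + o(n)$, since the tail $\sum_{|p_i|>L} |p_i| 2^{-|p_i|/T}$ decays geometrically at rate roughly $2^{-L/T}$ (the linear $|p_i|$ factor is absorbed into the exponential). Such an enumeration is recoverable from $\Omega_L$ with $O(\log L)$ side information, and $H(\Omega_L)\le L+O(\log L)$ by \eqref{eq: fas}, so $H((E(T))_n)\le Tn+o(n)$; the argument applies unchanged to $S(T)$ and $C(T)$. The harder Chaitin $T$-randomness statement runs in the opposite direction: from $(E(T))_n$ (together with the computable $T$) one aims to recover $\Omega_{L(n)+g(n)}$ for some $g(n)\to\infty$, giving $H((E(T))_n)\ge Tn+g(n)-O(1)$. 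The leverage over $Z(T)$ (which is only weakly Chaitin $T$-random) comes from the extra $|p_i|$ factor in each Boltzmann weight for $E(T)$: it spreads the values of $E(T)$ associated with different halting-program configurations more widely than those of $Z(T)$, so fixing more digits of $E(T)$ pins down strictly more of the halting set.

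For part (ii) with $T>1$, Theorem~\ref{cprpffe} gives $Z(T)=\infty$; for any fixed $N$, the sum $\sum_{|p_i|<N}2^{-|p_i|/T}$ is bounded while $Z_k(T)\to\infty$, so $\sum_{i\le k,\,|p_i|\ge N}2^{-|p_i|/T}/Z_k(T)\to 1$, giving $E_k(T)\ge N(1-o(1))$ and hence $E(T)=\infty$; then $S(T)=\infty$ via $S_k(T)=E_k(T)/T+\log_2 Z_k(T)$. For $T=1$, I would show $\sum_i|p_i|2^{-|p_i|}=\infty$ by counting: for each $n$, at least $2^{n-1}$ strings of length $n$ satisfy $H(s)\ge n-1$, and since $H(s)\le n+2\log_2 n+O(1)$ by \eqref{eq: fas}, their distinct shortest programs contribute at least $2^{n-1}(n-1)\cdot 2^{-(n+2\log_2 n+O(1))}$, which is of order $1/n$, yielding a divergent total over $n$. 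Finiteness of $F(1)=-\log_2\Omega$ then forces $S(1)=\infty$, and $C(1)$ diverges because the second-moment sum $\sum_i|p_i|^2 2^{-|p_i|}$ diverges strictly faster than $E_k(1)^2$ can accumulate, leaving a positive divergent variance.

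The main obstacle is the Chaitin $T$-randomness upgrade in part (i): by Reimann--Stephan \cite{RS05}, weak Chaitin $T$-randomness does not generally imply Chaitin $T$-randomness at $T<1$, so one must exploit the specific analytic structure of $E$, $S$, $C$ as Boltzmann averages and the strict positivity of $C(T)$, rather than any generic left/right-computability property held also by $Z$ and $F$.
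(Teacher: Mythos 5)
The paper never actually proves Theorem~\ref{cpreesh} --- it is imported from \cite{T08CiE} --- but the machinery developed here (Lemma~\ref{uniform}, the identities \eqref{CkT} and \eqref{CkTv}, and the proofs of Theorems~\ref{gfpCTr} and \ref{gfpTc2}) makes the intended argument clear, and measured against it your proposal has one genuine quantitative error, in the $T$-compressibility step. You claim the tail $\sum_{\abs{p_i}>L}\abs{p_i}2^{-\abs{p_i}/T}$ decays at rate roughly $2^{-L/T}$. It does not: the number of halting programs of a given length $l$ can be an exponentially large fraction of $2^l$, and the only a priori control is Kraft's inequality, so the correct bound is $\sum_{\abs{p}>L}\abs{p}\,2^{-\abs{p}/T}\le\max_{x>L}\bigl(x\,2^{-x(1/T-1)}\bigr)\cdot\sum_{\abs{p}>L}2^{-\abs{p}}=O\bigl(L\,2^{-L(1/T-1)}\bigr)$. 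With your choice $L=Tn+o(n)$ this is about $2^{-(1-T)n}$, nowhere near the required precision $2^{-n}$; to reach $2^{-n}$ by brute-force enumeration of short programs you would need $L\approx Tn/(1-T)$, which yields only $\tfrac{T}{1-T}$-compressibility and is vacuous for $T\ge 1/2$. This is exactly the obstacle that the proof of Theorem~\ref{gfpTc2} is built to overcome: one fixes a computable $u\in(T,1)$, uses $\lceil Tn/u\rceil$ bits of the auxiliary real $\beta=\sum_i\abs{p_i}^{b}2^{-\abs{p_i}/u}$ to locate a cut-off $k_e$ whose tail is below $2^{-Tn/u}$ \emph{at temperature $u$}, and then raises the tail termwise to the power $u/T\ge 1$ (via $x^z+y^z\le(x+y)^z$) to convert this into a genuine $2^{-n}$ tail bound at temperature $T$, at an advice cost of only $Tn/u+o(n)$ bits. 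Without this change of temperature and the power trick, the compressibility claim does not follow from your argument.

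The rest is in better shape but uneven. Your convergence argument for part (i) is correct and matches Lemma~\ref{uniform}~(i), and your part (ii) is essentially right, except that for $C(1)=\infty$ the phrase ``the second moment diverges strictly faster'' is not an argument; what works is the single-atom variance bound $C_k(1)\ge\ln 2\,\{\abs{p_1}-E_k(1)\}^2\,2^{-\abs{p_1}}/Z_k(1)\to\infty$, i.e.\ \eqref{CkTv} combined with your own proof that $E_k(1)\to\infty$. Finally, the Chaitin $T$-randomness upgrade is only gestured at. Your intuition about the extra factor $\abs{p_i}$ is the right one, but the concrete mechanism (Theorem~\ref{gfpCTr}) should be stated: by \eqref{dEk} and Lemma~\ref{uniform}~(i) the increments satisfy $E_{k+1}(T)-E_k(T)\ge\abs{p_{k+1}}\,2^{-\abs{p_{k+1}}/T-a}$ for all large $k$, so knowing $E(T)$ to within $2^{-n}$ forces every not-yet-enumerated program to satisfy $\abs{p_i}\ge Tn+T\log_2\abs{p_i}-O(1)$; producing a string outside the outputs of the enumerated programs then gives $H\bigl((E(T))_n\bigr)\ge Tn+T\log_2 n-O(1)$, which is the Chaitin $T$-randomness. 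As written, your text identifies where the strengthening should come from but does not establish it.
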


The above
two
theorems show that
if $T$ is
a computable real number with
$T\in(0,1)$
then the temperature $T$ equals to the partial randomness
(and therefore the compression rate) of
the values of all the thermodynamic quantities
in Definition~\ref{tdqait}.
Note also that
the weak Chaitin $T$-randomness of
thermodynamic quantities for $T<1$
in Theorems~\ref{cprpffe}
is strengthen to the Chaitin $T$-randomness
in Theorem~\ref{cpreesh}
in exchange for the divergence of
thermodynamic quantities
at $T=1$.

In statistical mechanics or thermodynamics,
among all thermodynamic quantities
one of the most typical thermodynamic quantities is temperature
itself.
Inspired by this fact in physics and the above observation
on the role of the temperature $T$
in
the statistical mechanical interpretation of AIT,
the following question
thus arises naturally:
Can the partial randomness of the temperature $T$
equal to the temperature $T$ itself
in
the
statistical mechanical interpretation of AIT~?
This question is rather self-referential.
However,
we can answer it affirmatively
in the following form.

\begin{theorem}[fixed point theorem on partial randomness,
\cite{T08CiE}]\label{main}
For every $T\in(0,1)$,
if $Z(T)$ is computable,
then $T$ is weakly Chaitin $T$-random and $T$-compressible,
and therefore
\begin{equation*}
  \lim_{n\to\infty}\frac{H(T_n)}{n}=T.
\end{equation*}
\qed
\end{theorem}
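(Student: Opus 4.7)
The statement has two conclusions about $T$ under the hypothesis that $Z(T)$ is computable: that $T$ is weakly Chaitin $T$-random, and that $T$ is $T$-compressible. I plan to handle these separately. The first conclusion requires a direct construction mirroring Chaitin's argument that $\Omega$ is weakly Chaitin random. The second I plan to derive from the first, after first establishing that the hypothesis makes $T$ right-computable, by invoking the known lemma (from \cite{T99,T02}) that a one-sided computable weakly Chaitin $T$-random real is automatically $T$-compressible.

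\textbf{Weak Chaitin $T$-randomness.} My plan is to build a partial recursive map $\Phi\colon\X\to\X$ such that $\Phi(T_n)$ is a string $s_n$ with $H(s_n)\ge Tn-c'$ for a constant $c'$. Composing $\Phi$ with an optimal $U$-program for $T_n$ yields a computer $C$ with $H_C(s_n)\le H(T_n)$, and then optimality of $U$ delivers $H(T_n)\ge Tn-O(1)$. On input $\sigma$ of length $n$, $\Phi$ first interprets $\sigma$ as a rational $\tilde T\in[0,1)$ approximating $T$ to precision $2^{-n}$, uses the hypothesized computability of $Z(T)$ to compute a rational $q$ with $\abs{q-Z(T)}<2^{-n}$, and enumerates $p_1,p_2,\ldots$ while maintaining rational approximations $\tilde Z_k$ to $Z_k(T)$ obtained by substituting $\tilde T$ for $T$. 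The substitution error $\abs{Z_k(T)-Z_k(\tilde T)}$ is uniformly $O(2^{-n})$ on any compact subinterval of $(0,1)$ by the mean value theorem and the bound $Z_k'\le Z'<\infty$. The machine halts at the first $k$ with $\tilde Z_k>q-C\cdot2^{-n}$ for a fixed constant $C$, which forces $Z(T)-Z_k(T)=O(2^{-n})$. A missing-program count then guarantees that every $p\in\Dom U$ with $\abs{p}\le Tn-c'$ lies in $\{p_1,\ldots,p_k\}$, for $c'$ chosen so that $2^{c'/T}$ exceeds the tail constant: a missed such program would contribute $2^{-\abs{p}/T}\ge2^{-n+c'/T}$ to the tail, contradicting the bound. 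Finally $\Phi$ outputs the lexicographically least string $s$ not lying in the finite set $\{U(p_i):i\le k,\,\abs{p_i}\le Tn-c'\}$; by construction no $U$-program of length $\le Tn-c'$ outputs $s$, so $H(s)>Tn-c'$.

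\textbf{$T$-compressibility.} The function $Z\colon(0,\infty)\to(0,\infty)$ is strictly increasing, so for any rational $T'>0$ one has $T'>T$ iff $Z(T')>Z(T)$. The right-hand inequality is effectively certifiable from a lower bound on $Z(T')$ (available via left-computability of the partial sums $Z_k(T')$ for rational $T'$) together with an upper bound on $Z(T)$ (available by the computability of $Z(T)$). Enumerating pairs $(T',(k,u))$ with $Z_k(T')>u\ge Z(T)$ therefore yields an r.e.\ enumeration of $\{T'\in\Q_{>0}:T'>T\}$, proving that $T$ is right-computable. Combined with Part (a), $T$ is a right-computable weakly Chaitin $T$-random real; the lemma from \cite{T99,T02} that any left- or right-computable weakly Chaitin $T$-random real is $T$-compressible then concludes the argument.

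\textbf{Main obstacle.} The principal technical difficulty is the precision bookkeeping in Part (a). The substitution of $\tilde T=T_n$ for $T$ in the partial sums $Z_k(T)$ introduces errors of the same order $2^{-n}$ as the tail budget controlling the missing-program count, so the constants $C$ (halting threshold), the precision of $q$, and the missing-program slack $c'$ must all be chosen consistently and uniformly in $n$. Once these constants are pinned down, Part (b) is a black-box reduction to the right-computability derivation together with the cited lemma.
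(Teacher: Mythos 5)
Your argument for the weak Chaitin $T$-randomness of $T$ is essentially the paper's: cut off the enumeration of $\Dom U$ at a stage $k$ where $Z(T)-Z_{k}(T)=O(2^{-n})$ (located effectively from $T_n$ together with a hard-coded approximating function for the computable real $Z(T)$), conclude that every program of length at most $Tn-O(1)$ occurs among $p_1,\dots,p_k$, and diagonalize against $\{U(p_i)\mid i\le k\}$. The same is true of your derivation of the right-computability of $T$ from the strict monotonicity of $Z$ and the left-computability of the partial sums $Z_k(r)$ for rational $r$; this is exactly the paper's Theorem~\ref{gfpTc1} specialized to $Z$. Both of these parts are sound, up to the precision bookkeeping you already flag.

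The gap is in the $T$-compressibility step. The lemma you invoke --- that a left- or right-computable weakly Chaitin $T$-random real is automatically $T$-compressible --- is false, and does not appear in \cite{T99,T02}. Counterexample: $\Omega$ is left-computable and weakly Chaitin random, hence weakly Chaitin $\tfrac{1}{2}$-random; but $H(\Omega_n)\ge n-O(1)$ forces $\liminf_{n}H(\Omega_n)/n=1$, so $\Omega$ is not $\tfrac{1}{2}$-compressible ($1-\Omega$ gives the right-computable version). Weak Chaitin $T$-randomness is a lower bound on $H(\alpha_n)$ and says nothing about the required upper bound $H(\alpha_n)\le Tn+o(n)$; that upper bound genuinely has to be proved, and it is the most delicate part of the paper's argument (Theorem~\ref{gfpTc2}). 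There one fixes a computable $u\in(T,1)$, introduces the auxiliary real $\beta=\sum_i\abs{p_i}^{b}2^{-\abs{p_i}/u}$, shows that from the first $\lceil Tn/u\rceil$ bits of $\beta$ one can effectively reach a stage $k_e$ with $Z(T)-Z_{k_e}(T)<2^{O(1)-n}$, and then combines the right-computability of $T$, the left-computability of $Z(T)$, and the lower Lipschitz bound $Z(x)-Z(T)\ge 2^{-a}(x-T)$ to pin $T_n$ down to $O(1)$ candidates, yielding $H(T_n)\le Tn/u+o(n)$; letting $u\to 1$ gives $T$-compressibility. Your proposal contains no substitute for this construction, so the second half of the theorem remains unproved.
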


%
%
%

Theorem~\ref{main} is just a fixed point theorem on partial randomness,
where the computability of the value $Z(T)$
gives a sufficient condition
for a real number $T\in(0,1)$ to be a fixed point on partial randomness.
%
Thus,
the above observation
that the temperature $T$ equals to the partial randomness of
the values of the thermodynamic quantities
in the statistical mechanical interpretation of AIT
is further confirmed.
In this paper,
we confirm this observation much further
by showing that
fixed point theorems of the same form as Theorem~\ref{main} hold also for
free energy $F(T)$, energy $E(T)$,
and statistical mechanical entropy $S(T)$.
For completeness,
we include the proof of Theorem~\ref{main}
in Appendix~\ref{proof-main}.
%


\section{The main results}
\label{fpts}

The following three theorems are the main results of this paper.

\begin{theorem}[fixed point theorem by free energy]\label{fpt-free_energy}
For every $T\in(0,1)$,
if $F(T)$ is computable
then $T$ is weakly Chaitin $T$-random and $T$-compressible.\qed
\end{theorem}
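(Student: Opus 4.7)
The plan is to follow the blueprint of the proof of Theorem~\ref{main} recalled in Appendix~\ref{proof-main}, using the relation $F(T) = -T \log_2 Z(T)$ to translate quantitative information about $F$ back into quantitative information about $Z$. As a preliminary, I would observe that for each rational $t\in(0,1)$ the sequence $F_k(t)=-t\log_2 Z_k(t)$ is computable in $(k,t)$ and decreases monotonically to $F(t)$, because $Z_k(t)$ is monotone increasing in $k$ and bounded above by $Z(t)<\infty$. I would also record that $F$ is strictly decreasing on $(0,1)$: differentiating yields $F'(T) = -S(T)$, where $S(T)$ is the Shannon entropy of the probability distribution with weights $w_i = 2^{-\abs{p_i}/T}/Z(T)$ and is therefore strictly positive. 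Consequently, for $t \le T < t'$ in $(0,1)$ one has $F(t') < F(T) \le F(t)$ and $F_k(t') \downarrow F(t') < F(T)$.

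For the weak Chaitin $T$-randomness of $T$, I would imitate the proof of Theorem~\ref{main}. Given $T_n$, set $t = 0.T_n$ and $t' = t + 2^{-n}$, so $t \le T < t'$. The computability of $F(T)$ provides a recursive sequence of rationals $\phi_m$ with $\abs{\phi_m - F(T)} < 2^{-m}$, so the algorithm can effectively find the least $k(n)$ for which $F_{k(n)}(t') < \phi_n + 2^{-n}$. This inequality implies
\begin{equation*}
Z_{k(n)}(t') \;>\; 2^{-(\phi_n + 2^{-n})/t'},
\end{equation*}
and since $\abs{t'-T}<2^{-n}$ the right-hand side differs from $Z(T)=2^{-F(T)/T}$ by at most a multiplicative factor $1 + O(2^{-n})$. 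Thus the first $k(n)$ programs of $U$ account for essentially all of $Z(T)$, and the standard argument used in the $Z(T)$ case, which relates this near-exhaustion to the decision of halting for programs of length at most $Tn$, yields $H(T_n) \ge Tn - O(1)$.

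For the $T$-compressibility of $T$, I would first note that the monotonicity of $F$ together with the computability of $F(T)$ makes $T$ right-computable: for rational $\tilde T$ one eventually witnesses $F_k(\tilde T) < F(T)$ if and only if $\tilde T > T$. Combining this right-computability with the effective rational lower bounds on $T$ of accuracy roughly $2^{-Tn + O(1)}$ that drop out of the enumeration described in the previous paragraph, one obtains, by an argument in the spirit of the Solovay reduction, a description of $T_n$ of length at most $Tn + o(n)$, which is exactly the $T$-compressibility bound.

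The main obstacle in adapting the $Z(T)$ proof is that $T$ enters $F(T)$ both as the argument of $Z$ and as an outer multiplicative factor, so a perturbation of $T$ by $2^{-n}$ propagates through $F$ in two coupled ways. The hard part will be verifying, by a uniform convergence lemma for $F_k$ on compact subintervals of $(0,1)$ together with a careful derivative estimate using $F' = -S$, that the $O(2^{-n})$ multiplicative slack on $Z_{k(n)}(t')$ remains tight enough to preserve the threshold $Tn$ in the lower bound on $H(T_n)$. Once this uniformity is in place, the remainder of the proof is a direct transliteration of the $Z(T)$ argument.
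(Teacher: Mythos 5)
Your treatment of the weak Chaitin $T$-randomness and of the right-computability of $T$ is sound and is essentially the paper's own argument in different clothing: where you translate the hypothesis on $F$ into a lower bound on $Z_{k(n)}(t')$ and then invoke the $Z(T)$ machinery, the paper works directly with $g(T,k)=-F_k(T)$ and verifies the hypotheses of the abstract Theorems~\ref{gfpwCTr} and \ref{gfpTc1}; the uniform Lipschitz estimate you flag as ``the hard part'' is exactly the paper's condition (iii), obtained from the mean value theorem together with $F_k'=-S_k$ and the uniform bounds $0<S_{k_1}(T)\le S_k(T)<S(t)$. (You also correctly avoid the trap of arguing that $F(T)$ computable implies $Z(T)=2^{-F(T)/T}$ computable: since $T$ is not known in advance to be computable this inference is unavailable, and in fact Section~\ref{psc} shows that $Z(T)$ and $F(T)$ are never simultaneously computable.)

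The genuine gap is in the $T$-compressibility half. To prove $H(T_n)\le Tn+o(n)$ you must exhibit a program of length roughly $Tn$ whose \emph{output} is $T_n$, and your sketch does not say what that program is. The ``effective rational lower bounds on $T$ of accuracy roughly $2^{-Tn+O(1)}$ that drop out of the enumeration described in the previous paragraph'' cannot serve this purpose, because that enumeration takes $T_n$ as its input; and right-computability of $T$ by itself only yields a recursive sequence decreasing to $T$ with no effective modulus of convergence. The crux is that to locate $T$ within $2^{-n}$ one must first certify a stage $k_e$ at which the tail $F(T)-F_{k_e}(T)$ (equivalently $Z(T)-Z_{k_e}(T)$) has dropped below $2^{-n+O(1)}$, and it is precisely the information needed to certify such a stage that has to be compressed into $Tn+o(n)$ bits. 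The paper's proof of Theorem~\ref{gfpTc2} does this with an auxiliary device absent from your sketch: it fixes a computable $u\in(T,1)$, forms the weakly Chaitin $u$-random real $\beta=\sum_i\abs{p_i}^{b}2^{-\abs{p_i}/u}$, uses its first $\lceil Tn/u\rceil$ bits as the short description to find $k_e$ with $\sum_{i>k_e}\abs{p_i}^{b}2^{-\abs{p_i}/u}<2^{-Tn/u}$, converts this into a tail bound at temperature $T$ via the inequality $x^z+y^z\le(x+y)^z$ with $z=u/T\ge 1$, and only then combines the right-computability of $T$ with the left-computability of $-F(T)$ to pin $T_n$ down to $O(1)$ candidates; letting $u\to 1$ upgrades $T/u$-compressibility to $T$-compressibility. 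Some mechanism of this kind is indispensable, and ``an argument in the spirit of the Solovay reduction'' does not supply it.
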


\begin{theorem}[fixed point theorem by energy]\label{fpt-energy}
For every $T\in(0,1)$,
if $E(T)$ is computable
then $T$ is Chaitin $T$-random and $T$-compressible.\qed
\end{theorem}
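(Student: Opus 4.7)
The plan is to adapt the proof of Theorem~\ref{main} (the fixed point theorem via $Z(T)$), replacing $Z$ by $E$ and leveraging the sharper randomness lower bound that Theorem~\ref{cpreesh} gives for $E$ (Chaitin $T$-randomness rather than merely weak Chaitin $T$-randomness). The argument splits into showing $T$ is Chaitin $T$-random and that $T$ is $T$-compressible.

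First I would establish the analytic behaviour of $E$ on $(0,1)$. Since $E'(T)=C(T)>0$, on any closed interval $[a,b]\subset(0,1)$ one has constants $0<m\le M$ with $m\le E'(T)\le M$, and hence the two-sided Lipschitz bound
\begin{equation*}
    m\abs{T_1-T_2}\le \abs{E(T_1)-E(T_2)}\le M\abs{T_1-T_2}
    \qquad (T_1,T_2\in[a,b]).
\end{equation*}
I would also pin down a computable upper bound on $\abs{E(T')-E_k(T')}$ as a function of $k$ and $T'\in\Q\cap(0,1)$. The subtlety compared with the $Z(T)$ case is that $E_k(T')$ is a quotient of partial sums and need not be monotone in $k$, so monotonicity arguments have to be replaced by explicit tail estimates using the convergence already proved in Theorem~\ref{cpreesh}.

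With those tools in hand, the main decoding proceeds as follows. Assume $E(T)$ is computable and fix $[a,b]\subset(0,1)$ with $T\in(a,b)$. For each $n$ set $T'=0.T_n$, so $0\le T-T'<2^{-n}$; the Lipschitz bound forces $E(T)-E(T')\le M\cdot 2^{-n}$. Combining the computable approximation to $E(T)$ with the algorithmic approximations $E_k(T')$, one locates the least $k=k(n)$ for which $E_k(T')$ lies within a prescribed $O(2^{-n})$ of $E(T)$. From $T_n$ together with $k(n)$, encoded in $O(\log n)$ additional bits, one can then reconstruct a prefix of the halting set of $U$ of length roughly $n/T$; since that prefix is essentially Chaitin random, this forces $H(T_n)\ge Tn + f(n)$ with $f(n)\to\infty$, i.e.\ Chaitin $T$-randomness. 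The $T$-compressibility inequality $H(T_n)\le Tn+o(n)$ is obtained by reversing the roles: $E(T)$ being computable is specified by $O(1)$ bits plus precision, and combined with roughly $Tn+o(n)$ bits of halting information at length $\approx n$ (which compresses to $Tn+o(n)$ bits as in the $T$-compressibility arguments for $\Omega$-like objects), the Lipschitz lower bound on $E$ lets one invert and recover $T_n$.

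The main technical obstacle I expect is controlling the tail $E(T')-E_k(T')$ uniformly as $T'\to 1^{-}$: this is the same divergence that forces Theorem~\ref{cpreesh} to exclude $T=1$, and it is why the hypothesis $T<1$ is essential in Theorem~\ref{fpt-energy}. Localising everything to a compact subinterval of $(0,1)$ containing $T$, exactly as in the proof of Theorem~\ref{main}, should circumvent it, but the resulting error estimates must be tracked carefully through the ratio $N_k(T')/Z_k(T')$ so as not to lose the $f(n)\to\infty$ margin that distinguishes Chaitin $T$-randomness from its weak counterpart and that is the sole difference between the conclusions of Theorems~\ref{main} and \ref{fpt-energy}.
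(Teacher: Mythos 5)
Your overall shape is right — localise to a compact subinterval, use $E'(T)=C(T)>0$ for two-sided Lipschitz bounds, handle the non-monotonicity of $E_k(T)$ in $k$ by eventual-monotonicity/tail estimates, and run a decoding argument as for $Z(T)$ — and this matches the paper's strategy (which factors everything through three general theorems applied to $g(T,k)=E_k(T)$). But there are two genuine gaps. First, the source of the $f(n)\to\infty$ margin that upgrades weak Chaitin $T$-randomness to Chaitin $T$-randomness is misidentified. You appeal to the halting-set prefix being ``essentially Chaitin random,'' but the standard prefix argument only yields a string $s$ with $H(s)>m$ for $m\approx Tn$ (note: $Tn$, not $n/T$ as you wrote — you cannot extract more than $n$ bits of halting information from $T_n$), which gives only $H(T_n)\ge Tn-O(1)$, i.e.\ weak Chaitin $T$-randomness; and for $T<1$ Reimann--Stephan showed this does \emph{not} imply Chaitin $T$-randomness, so no free upgrade is available. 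The paper's margin comes from a concrete, different place: the increments satisfy $E_{k+1}(T)-E_k(T)\ge\abs{p_{k+1}}\,2^{-\abs{p_{k+1}}/T-a}$ (the extra factor $\abs{p_{k+1}}$, absent for $Z_k$), so knowing $E(T)$ to precision $2^{-n}$ forces every remaining program to satisfy $\abs{p_i}>Tn+T\log_2\abs{p_i}-O(1)$, and the resulting string $s$ has $H(s)>Tn+T\log_2 H(s)-O(1)$; the growing term $T\log_2 n$ is exactly the margin. Your sketch never extracts this factor.

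Second, the $T$-compressibility half is only gestured at and skips the two hard points. (a) You must first show that $T$ itself is right-computable (the paper derives this from the right-computability of $E(T)$, the monotonicity of $E$, and the r.e.-ness of $\{(q,r,k):q<E_k(r)\}$); without a recursive sequence of rationals decreasing to $T$, the ``inversion'' of the Lipschitz lower bound cannot be carried out effectively. (b) The real content is how to specify, in only $Tn+o(n)$ bits, enough information to find an index $k_e$ with $E(T)-E_{k_e}(T)<2^{-n+O(1)}$; the paper does this with an auxiliary real $\beta=\sum_i\abs{p_i}^{b}2^{-\abs{p_i}/u}$ for a computable $u\in(T,1)$, using only its first $\lceil Tn/u\rceil$ bits and then letting $u\to 1^-$. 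Your phrase ``which compresses to $Tn+o(n)$ bits as in the $T$-compressibility arguments for $\Omega$-like objects'' does not apply directly, because those arguments assume $T$ is computable, which is precisely not available here. As it stands the proposal is a plausible plan rather than a proof.
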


\begin{theorem}[fixed point theorem by
statistical mechanical
entropy]\label{fpt-entropy}
For every $T\in(0,1)$,
if $S(T)$ is computable
then $T$ is Chaitin $T$-random and $T$-compressible.\qed
\end{theorem}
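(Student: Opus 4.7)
Let $T \in (0,1)$ and assume $s := S(T)$ is computable. My strategy is to adapt the proof of Theorem~\ref{main}. The key mechanism there is that, although $Z(T)$ is only left-computable in general, one can effectively bound the tail $Z(t) - Z_k(t)$ for rational $t \in (0,1)$ given a short auxiliary input --- essentially a prefix of $\Omega$ --- and this, combined with the computability of the thermodynamic quantity, lets one effectively invert the monotone map $t \mapsto Z(t)$. The same ingredients are available here: the partial sums $Z_k(t)$, $E_k(t)$, $F_k(t)$, and hence $S_k(t) = (E_k(t) - F_k(t))/t$, are uniformly computable in rational $t$, with tails that are effectively controllable given a prefix of $\Omega$ of suitable length. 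Since $S$ is strictly increasing on $(0,1)$ (its derivative $S'(T) = C(T)/T$ being strictly positive there), this sets up an effective inversion of $S$ analogous to the effective inversion of $Z$ used in the proof of Theorem~\ref{main}.

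For $T$-compressibility ($H(T_n) \le Tn + o(n)$), the plan is to adapt the Theorem~\ref{main} argument with $S_k$ in place of $Z_k$. Given a prefix of $\Omega$ of length $Tn + o(n)$ as auxiliary input, together with the (free) computable description of $s$, one recovers $T_n$ by binary-searching rationals $t$ and comparing $S_k(t)$ against a sufficiently close rational approximation of $s$, using the $\Omega$-prefix to bound the tails. The complexity budget works out because $H(\Omega_m) \le m + O(1)$, and the required $m$ is $Tn + o(n)$: the factor $T$ enters because reconstructing the sum to precision $2^{-n}$ requires knowing all programs of length up to $\approx nT$, for which a prefix of $\Omega$ of length $\approx nT$ suffices.

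For Chaitin $T$-randomness ($H(T_n) - Tn \to \infty$), the plan is to refine the above quantitatively so as to mirror the sharper conclusion enjoyed by $S(T)$ itself in Theorem~\ref{cpreesh}. The cancellation inside $S(T) = \log_2 Z(T) + E(T)/T$ produces a faster decay of the tail $S(T) - S_k(T)$ than the corresponding tail for $Z(T)$ alone. This lets one argue that a supposed short description $H(T_n) \le Tn + O(1)$, combined with computability of $s$, would convert into an approximation of $s$ sharper than the divergence-rate estimates underlying the Chaitin $T$-randomness of $S(T)$ in Theorem~\ref{cpreesh} permit, producing a contradiction and hence forcing $H(T_n) - Tn \to \infty$.

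The main obstacle is the quantitative control of $S_k(t) \to S(t)$ uniformly in $t$ near $T$, and its integration into the effective inversion procedure. A uniform lower bound on $S'$ near $T$ (needed to turn precision in $s$ into precision in $T$) should follow from the finiteness and positivity of $C(T)$ on $(0,1)$ implicit in Theorem~\ref{cpreesh}. The extra cancellation driving the Chaitin-versus-weak-Chaitin upgrade is the more delicate point and is where the proof must genuinely depart from the $Z$-case.
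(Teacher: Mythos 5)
Your plan for the $T$-compressibility half is essentially the paper's: the paper's Theorem~\ref{gfpTc2} recovers $T_n$ from a prefix of length $\approx Tn$ of an auxiliary halting-probability-like real (it uses $\beta=\sum_i\abs{p_i}^{b}2^{-\abs{p_i}/u}$ with a computable $u\uparrow 1$ rather than $\Omega$ itself, but your tail estimate with $\Omega_m$, $m=Tn+O(\log n)$, works too), together with an effective inversion of the increasing map $S$ whose derivative $S'=C/T$ is bounded below near $T$. One step you elide is that this inversion first requires $T$ to be right-computable; the paper extracts that separately (Theorem~\ref{gfpTc1}) from the computability of $S(T)$ before running the compression argument.

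The genuine gap is in the Chaitin $T$-randomness half, which is exactly where you acknowledge the proof must depart from the $Z$-case. First, the mechanism you propose is backwards: there is no cancellation in $S_k(T)=E_k(T)/T+\log_2 Z_k(T)$ (both summands have eventually positive increments), and the tail $S(T)-S_k(T)$ decays \emph{more slowly} than $Z(T)-Z_k(T)$, since $S_{k+1}(T)-S_k(T)\approx \abs{p_{k+1}}\,2^{-\abs{p_{k+1}}/T}/(TZ(T))$ carries an extra factor $\abs{p_{k+1}}$ coming from the energy term. That slower decay is precisely what buys the upgrade: in the paper's Theorem~\ref{gfpCTr}, an $n$-bit upper approximation $0.T_n+2^{-n}$ of $T$ forces $f(T)-g(T,k_e)<2^{a-n}$ and hence $\abs{p_i}\,2^{-\abs{p_i}/T-b}<2^{a-n}$ for all $i>k_e$, i.e.\ $\abs{p_i}\ge Tn+T\log_2\abs{p_i}-O(1)$; diagonalizing against $\{U(p_i)\mid i\le k_e\}$ then yields $H(T_n)\ge Tn+T\log_2 n-O(1)$ directly. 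Second, your proposed contradiction with Theorem~\ref{cpreesh} is unavailable: that theorem's randomness conclusions are proved under the hypothesis that $T$ is \emph{computable}, which is exactly what fails here (indeed $T$ is being shown partially random), so there are no "divergence-rate estimates" for $S(T)$ at this $T$ to contradict. Finally, refuting $H(T_n)\le Tn+O(1)$ would not even give the conclusion, since Chaitin $T$-randomness requires $H(T_n)-Tn\to\infty$, i.e.\ an explicit growing lower bound such as the $T\log_2 n$ term above, not merely the failure of a bounded-difference estimate.
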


First, note that
the weak Chaitin $T$-randomness of $T$ in Theorems~\ref{main}
is strengthen to the Chaitin $T$-randomness of $T$
in Theorems~\ref{fpt-energy} and \ref{fpt-entropy}.

The proof of Theorem~\ref{fpt-free_energy} uses
Theorems~\ref{gfpwCTr}, \ref{gfpTc1}, and \ref{gfpTc2} below.
On the other hand,
the proofs of Theorems~\ref{fpt-energy} and \ref{fpt-entropy} use
Theorems~\ref{gfpCTr}, \ref{gfpTc1}, and \ref{gfpTc2} below.
All these
proofs
also use
Theorem~\ref{tdr} below,
where the thermodynamic relations in statistical mechanics
are recovered by the thermodynamic quantities of AIT.
%
We
complete
the proofs of
Theorems~\ref{fpt-free_energy},
\ref{fpt-energy},
and \ref{fpt-entropy}
in the next section.
Compared with the proof of Theorem~\ref{fpt-free_energy},
the proofs of Theorems~\ref{fpt-energy} and \ref{fpt-entropy}
are more delicate.

\begin{theorem}[thermodynamic relations]\label{tdr}\hfill
\begin{enumerate}
  \item $F_{k}'(T)=-S_{k}(T)$, $E_{k}'(T)=C_{k}(T)$,
    and $S_{k}'(T)=C_{k}(T)/T$
    for every $k\in\N^+$ and every $T\in(0,1)$.
  \item $F'(T)=-S(T)$, $E'(T)=C(T)$, and $S'(T)=C(T)/T$
    for every $T\in(0,1)$.
  \item $S_{k}(T),C_{k}(T)\ge 0$
    for every $k\in\N^+$ and every $T\in(0,1)$.
    There exists $k_0\in\N^+$ such that,
    for every $k\ge k_0$ and every $T\in(0,1)$,
    $S_{k}(T),C_{k}(T)>0$.
    Moreover, $S(T),C(T)>0$ for every $T\in(0,1)$.\qed
\end{enumerate}
\end{theorem}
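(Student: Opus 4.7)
Part~(i) is a direct differentiation of finite sums. Differentiating $Z_k$ termwise yields $Z_k'(T)=(\ln 2/T^2)\sum_{i=1}^k |p_i|2^{-|p_i|/T}=(\ln 2/T^2)\,Z_k(T)E_k(T)$, whence the chain rule applied to $F_k(T)=-T\log_2 Z_k(T)$ gives
\[
F_k'(T)=-\log_2 Z_k(T)-\frac{T}{\ln 2}\cdot\frac{Z_k'(T)}{Z_k(T)}=\frac{F_k(T)-E_k(T)}{T}=-S_k(T).
\]
The identity $E_k'(T)=C_k(T)$ is definitional, and differentiating $S_k=(E_k-F_k)/T$ and substituting the two previous relations gives $S_k'(T)=E_k'(T)/T=C_k(T)/T$.

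Part~(ii) transfers each identity of~(i) to the limit via the standard theorem: if $f_k\to f$ pointwise on an open set $I$ and $f_k'\to g$ uniformly on every compact $K\subset I$, then $f$ is differentiable with $f'=g$. Taking $f_k=F_k$ and $g=-S$, and using $F_k\to F$ pointwise on $(0,1)$ together with $F_k'=-S_k$ from~(i), the task reduces to showing $S_k\to S$ uniformly on compact subsets of $(0,1)$; the remaining two identities reduce identically to uniform convergence of $C_k\to C$. This uniform convergence is the expected main obstacle. The underlying tail estimate is standard: for $T$ confined to $[T_0,T_1]\subset(0,1)$ one has $2^{-|p_i|/T}\le 2^{-|p_i|/T_1}=2^{-|p_i|}\cdot 2^{-|p_i|(1/T_1-1)}$, so $|p_i|^m\cdot 2^{-|p_i|/T}$ is dominated by a constant multiple of $2^{-|p_i|}$ independent of $T$, and Kraft's inequality $\sum_i 2^{-|p_i|}\le 1$ then delivers a Weierstrass $M$-test bound for $m\in\{0,1,2\}$. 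This estimate is effectively the content of Lemma~\ref{uniform} referenced in the footnote above.

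For part~(iii) I would read $S_k$ and $C_k$ as information-theoretic quantities of the Gibbs distribution $q_i=2^{-|p_i|/T}/Z_k(T)$. A direct rewrite gives
\[
S_k(T)=\frac{E_k(T)}{T}+\log_2 Z_k(T)=-\sum_{i=1}^k q_i\log_2 q_i,
\]
the Shannon entropy in bits of $\{q_i\}$, which is nonnegative and strictly positive whenever $k\ge 2$ since all $q_i>0$. Differentiating $q_i$ in $T$ and substituting into $C_k=E_k'$ yields
\[
C_k(T)=\frac{\ln 2}{T^2}\left(\sum_{i=1}^k |p_i|^2 q_i-E_k(T)^2\right),
\]
a rescaled variance of lengths under $q$, nonnegative and strictly positive as soon as $\{p_1,\dotsc,p_k\}$ contains programs of two distinct lengths. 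The latter occurs from some $k_0$ onward, since $\Dom U$ is infinite while each fixed length admits only finitely many prefix-free programs; taking $k_0$ also to be at least $2$ gives $S_k,C_k>0$ for all $k\ge k_0$. Strict positivity of $S(T)$ and $C(T)$ follows from the same formulas applied to the infinite Gibbs distribution $q_i^\ast=2^{-|p_i|/T}/Z(T)$, whose support contains infinitely many distinct program lengths.
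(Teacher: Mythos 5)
Your proposal is correct and follows essentially the same route as the paper: termwise differentiation of the finite sums for (i), passage to the limit via uniform convergence of the derivative sequences on compact subsets of $(0,1)$ for (ii), and the Shannon-entropy and length-variance rewrites of $S_k$ and $C_k$ for (iii). The only cosmetic difference is that you justify the uniform tail bounds by a Weierstrass $M$-test with Kraft's inequality, whereas the paper's Lemma~\ref{uniform} uses a monotone comparison of tails in $T$; the underlying estimate is the same.
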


The proof of Theorem~\ref{tdr} uses Lemma~\ref{uniform} below.
For each $T\in(0,1)$,
we define $W(T)$ and $Y(T)$
as $\lim_{k\to\infty} W_{k}(T)$ and $\lim_{k\to\infty} Y_{k}(T)$,
respectively,
where
$W_{k}(T)=\sum_{i=1}^k \abs{p_i}2^{-\frac{\abs{p_i}}{T}}$
and
$Y_{k}(T)=\sum_{i=1}^k \abs{p_i}^2 2^{-\frac{\abs{p_i}}{T}}$.

\begin{lemma}\label{uniform}\hfill
\begin{enumerate}
  \item For every $T\in(0,1)$,
    the limit values $Z(T)$, $W(T)$, and $Y(T)$ exist,
    and are positive real numbers.
  \item The sequence $\{Z_{k}(T)\}_k$ of functions of $T$
    is uniformly convergent on $(0,1)$
    in the wider sense.
    The same holds for
    the sequences $\{W_{k}(T)\}_k$ and $\{Y_{k}(T)\}_k$.
  \item The function $Z(T)$ of $T$ is continuous on $(0,1)$.
    The same holds for the functions $W(T)$ and $Y(T)$.
\end{enumerate}
\end{lemma}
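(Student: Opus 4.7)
The plan is to derive all three parts from the Kraft-type bound $\Omega = \sum_i 2^{-\abs{p_i}} \le 1$ (which holds because $\Dom U$ is prefix-free), together with a single absorption trick that swallows the polynomial prefactors $\abs{p_i}^k$ ($k\in\{0,1,2\}$) into the exponential.

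For part (i), I would fix $T\in(0,1)$ and choose any $T_1$ with $T<T_1<1$. Setting $\delta=1/T-1/T_1>0$ and writing
\[
\abs{p_i}^k\, 2^{-\abs{p_i}/T}=\bigl(\abs{p_i}^k\, 2^{-\delta\abs{p_i}}\bigr)\cdot 2^{-\abs{p_i}/T_1},
\]
the elementary fact that $x\mapsto x^k 2^{-\delta x}$ is bounded on $[0,\infty)$ gives constants $M_0,M_1,M_2$ with $\abs{p_i}^k 2^{-\delta\abs{p_i}}\le M_k$ for every $i$. Comparison with $\sum_i 2^{-\abs{p_i}/T_1}\le \Omega\le 1$ then yields absolute convergence of the three series defining $Z(T)$, $W(T)$, $Y(T)$. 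Positivity is immediate, since every summand is nonnegative and $\Dom U$ is infinite, so in particular contains a program with $\abs{p_i}\ge 1$.

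For part (ii), given a compact $K\subset(0,1)$, I would enclose $K$ in a closed interval $[a,b]$ with $0<a\le b<1$ and pick $T_1\in(b,1)$. For every $T\in[a,b]$ one has $1/T\ge 1/b$, hence $2^{-\abs{p_i}/T}\le 2^{-\abs{p_i}/b}$. Rerunning the previous step with $\delta:=1/b-1/T_1>0$ produces the uniform domination
\[
\abs{p_i}^k\, 2^{-\abs{p_i}/T}\le M_k\, 2^{-\abs{p_i}/T_1}\qquad(T\in[a,b],\ k\in\{0,1,2\}),
\]
with $M_k$ independent of $T$. Since $\sum_i 2^{-\abs{p_i}/T_1}$ converges, the Weierstrass M-test forces the tails of all three series to vanish uniformly in $T\in[a,b]$, which is exactly uniform convergence on $(0,1)$ in the wider sense. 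Part (iii) is then automatic: each partial sum $Z_k$, $W_k$, $Y_k$ is a finite sum of the continuous functions $T\mapsto \abs{p_i}^k 2^{-\abs{p_i}/T}$ on $(0,1)$, and the standard theorem on uniform limits of continuous functions, applied on every compact subinterval via (ii), transfers continuity to $Z(T)$, $W(T)$, $Y(T)$.

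There is no real obstacle here; the only point that requires any thought is the choice of an auxiliary $T_1$ strictly between the relevant upper endpoint and $1$, which makes the absorption $\abs{p_i}^k 2^{-\abs{p_i}/T}\le M_k 2^{-\abs{p_i}/T_1}$ available. Once that comparator is in place, everything reduces to the Weierstrass M-test and the continuity-of-uniform-limits theorem.
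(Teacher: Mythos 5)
Your proof is correct and follows essentially the same route as the paper's: in both cases the polynomial prefactor $\abs{p_i}^k$ is absorbed into the surplus exponential decay available because $T<1$, so that each series is dominated termwise by (a constant multiple of) the convergent series $\Omega=\sum_i 2^{-\abs{p_i}}\le 1$, after which uniform convergence on compact subintervals and continuity of the uniform limit are routine. The only cosmetic differences are that the paper folds the factor $\abs{p_i}^2$ directly into the exponent via $1/T-2\log_2\abs{p_i}/\abs{p_i}\ge 1$ and, for part (ii), uses monotonicity of the tails in $T$ rather than your $T$-independent Weierstrass majorant; both devices accomplish the same thing.
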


\begin{proof}
(i)
Suppose that
$T$ is an arbitrary real number with $T\in(0,1)$.

First,
we show that $Y_{k}(T)$ converges as $k\to\infty$.
Since $T<1$,
there is $l_0\in\N^+$ such that
\begin{equation*}
  \frac{1}{T}-2\frac{\log_2 l}{l}\ge 1
\end{equation*}
for all $l\ge l_0$.
Then,
since $\lim_{k\to\infty}\abs{p_{k}}=\infty$,
there is $k_0\in\N^+$ such that $\abs{p_i}\ge l_0$ for all $i>k_0$.
Thus, we see that,
for each $i>k_0$,
\begin{equation*}
  \abs{p_i}^2 2^{-\frac{\abs{p_i}}{T}}
  =2^{-\left(\frac{1}{T}-2\frac{\log_2 \abs{p_i}}{\abs{p_i}}\right)
   \abs{p_i}}
  \le 2^{-\abs{p_i}}.
\end{equation*}
Hence, for each $k>k_0$,
\begin{equation*}
  Y_{k}(T)-Y_{k_0}(T)
  =\sum_{i=k_0+1}^{k} \abs{p_i}^2 2^{-\frac{\abs{p_i}}{T}}
  \le \sum_{i=k_0+1}^{k} 2^{-\abs{p_i}}<\Omega=Z(1).
\end{equation*}
Therefore,
since $\{Y_{k}(T)\}_k$ is
an increasing sequence of positive real numbers bounded to the above,
it converges to a positive real number as $k\to\infty$,
as desired.

Note that $0<Z_{k}(T)\le W_{k}(T)\le Y_{k}(T)$ for every $k\in\N^+$,
and the sequences $\{Z_{k}(T)\}_k$ and $\{W_{k}(T)\}_k$
of positive real numbers are increasing.
It follows that
$Z_{k}(T)$ and $W_{k}(T)$ converge to a positive real number
as $k\to\infty$.

(ii)
Note that,
for every $k\in\N^+$ and every $t,T\in(0,1)$ with $t\le T$, 
\begin{equation*}
  0
  <Z(t)-Z_{k}(t)
  =\sum_{i=k+1}^{\infty} 2^{-\frac{\abs{p_i}}{t}}
  \le\sum_{i=k+1}^{\infty} 2^{-\frac{\abs{p_i}}{T}}
  =Z(T)-Z_{k}(T).
\end{equation*}
It follows that
the sequence $\{Z_{k}(T)\}_k$ of functions of $T$
is uniformly convergent on $(0,1)$ in the wider sense.
In the same manner,
we can show that
the sequences $\{W_{k}(T)\}_k$ and $\{Y_{k}(T)\}_k$
are uniformly convergent on $(0,1)$ in the wider sense.

(iii)
Note that,
for each $k\in\N^+$,
the mapping $(0,1)\ni T\mapsto Z_{k}(T)$ is
a continuous function.
It follows from Lemma~\ref{uniform} (ii) that
the function $Z(T)$ of $T$ is continuous on $(0,1)$.
In the same manner,
we can show that
the functions $W(T)$ and $Y(T)$ of $T$ are continuous on $(0,1)$.
\end{proof}


The proof of Theorem~\ref{tdr} is then given as follows.

\begin{proof}[Proof of Theorem~\ref{tdr}]
(i)
First,
from Definition~\ref{tdqait}
we see that,
for every $k\in\N^+$ and every $T\in(0,1)$,
\begin{align}
  F_{k}(T)
  &=
  -T\log_2 Z_{k}(T),\label{FkT}\\
  E_{k}(T)
  &=
  \frac{W_{k}(T)}{Z_{k}(T)},\label{EkT}\\
  S_{k}(T)
  &=
  \frac{W_{k}(T)}{TZ_{k}(T)}+\log_2 Z_{k}(T),\label{SkT}\\
  Z_{k}'(T)
  &=
  \frac{\ln 2}{T^2}W_{k}(T),\label{Z'kT}\\
  W_{k}'(T)
  &=
  \frac{\ln 2}{T^2}Y_{k}(T).\label{W'kT}
\end{align}
Thus, by straightforward differentiation,
we can check that the relations of Theorem~\ref{tdr} (i) hold.
For example,
it follows from \eqref{SkT} and \eqref{EkT} that,
for every $k\in\N^+$ and every $T\in(0,1)$,
\begin{equation*}
  S_{k}'(T)
  =
  \frac{1}{T}E_{k}'(T)-\frac{1}{T^2}\frac{W_{k}(T)}{Z_{k}(T)}
  +\frac{1}{\ln 2}\frac{Z_{k}'(T)}{Z_{k}(T)}.
\end{equation*}
Using the definition $C_k(T)=E_k'(T)$
and the equation \eqref{Z'kT}
we see that,
for every $k\in\N^+$ and every $T\in(0,1)$,
$S_{k}'(T)=C_{k}(T)/T$.

(ii)
From \eqref{EkT}, \eqref{Z'kT}, \eqref{W'kT},
and the definition $C_k(T)=E_k'(T)$,
we see that,
for every $k\in\N^+$ and every $T\in(0,1)$,
\begin{equation}\label{CkT}
  C_{k}(T)
  =
  \frac{\ln 2}{T^2}
  \left\{
    \frac{Y_{k}(T)}{Z_{k}(T)}-\left(\frac{W_{k}(T)}{Z_{k}(T)}\right)^2
  \right\}.
\end{equation}
Thus,
using Lemma~\ref{uniform} (i) above and
the equations \eqref{FkT}, \eqref{EkT}, \eqref{SkT}, and \eqref{CkT},
we can first see that
the limit values $F(T)$, $E(T)$, $S(T)$, and $C(T)$ exist
for every $T\in(0,1)$.    
Using Lemma~\ref{uniform}
in whole
and the equations \eqref{SkT} and \eqref{CkT},
we can
next
check that
the sequences
$\{-S_{k}(T)\}_k$, $\{C_{k}(T)\}_k$ and $\{C_{k}(T)/T\}_k$
of functions of $T$
are uniformly convergent on $(0,1)$ in the wider sense.
Hence,
Theorem~\ref{tdr} (ii) follows immediately from Theorem~\ref{tdr} (i).

(iii)
From \eqref{SkT} 
we see that,
for every $k\in\N^+$ and every $T\in(0,1)$,
\begin{equation*}
  S_{k}(T)
  =
  -\sum_{i=1}^k
  \frac{2^{-\frac{\abs{p_i}}{T}}}{Z_{k}(T)}
  \log_2\frac{2^{-\frac{\abs{p_i}}{T}}}{Z_{k}(T)}.
\end{equation*}
Thus,
$S_{k}(T)\ge 0$
for every $k\in\N^+$.
We also see that,
for every $k\ge 2$ and every $T\in(0,1)$,
\begin{equation*}
  S_{k}(T)
  \ge
  -
  \frac{2^{-\frac{\abs{p_1}}{T}}}{Z_{k}(T)}
  \log_2\frac{2^{-\frac{\abs{p_1}}{T}}}{Z_{k}(T)}>0.
\end{equation*}
Hence, for every $T\in(0,1)$,
\begin{equation*}
  S(T)
  \ge
  -
  \frac{2^{-\frac{\abs{p_1}}{T}}}{Z(T)}
  \log_2\frac{2^{-\frac{\abs{p_1}}{T}}}{Z(T)}>0.
\end{equation*}

On the other hand,
from \eqref{CkT} 
we see that,
for every $k\in\N^+$ and every $T\in(0,1)$,
\begin{equation}\label{CkTv}
  C_{k}(T)
  =
  \frac{\ln 2}{T^2}
  \sum_{i=1}^k
  \{\abs{p_i}-E_{k}(T)\}^2\frac{2^{-\frac{\abs{p_i}}{T}}}{Z_{k}(T)}.
\end{equation}
Thus,
$C_{k}(T)\ge 0$
for every $k\in\N^+$ and every $T\in(0,1)$.
We note that
there exists $l\in\N^+$ such that
$\abs{p_{l}}\le \abs{p_{i}}$ for every $i\in\N^+$.
It is then easy to see that there exists $k_0\in\N^+$ such that,
for every $k\ge k_0$ and every $T\in(0,1)$,
$\abs{p_{l}}<E_{k}(T)$.
This is because there exists $i\in\N^+$ such that $\abs{p_{l}}<\abs{p_{i}}$.
Thus, by \eqref{CkTv} we see that,
for every $k\ge\max\{l,k_0\}$ and every $T\in(0,1)$,
\begin{equation}\label{CkTl}
  C_{k}(T)
  \ge
  \frac{\ln 2}{T^2}
  \{\abs{p_l}-E_{k}(T)\}^2\frac{2^{-\frac{\abs{p_l}}{T}}}{Z_{k}(T)}
  >0.
\end{equation}
It is also easy to see that $\abs{p_{l}}<E(T)$ for every $T\in(0,1)$.
It follows from \eqref{CkTl} that $C(T)>0$ for every $T\in(0,1)$.
\end{proof}

\begin{theorem}\label{gfpwCTr}
Let $f\colon (0,1)\to\R$.
Suppose that $f$ is increasing and
there exists $g\colon (0,1)\times\N^+\to\R$ which satisfies
the following four conditions:
\begin{enumerate}
  \item
    $\lim_{k\to\infty} g(T,k)=f(T)$ for every $T\in(0,1)$.
  \item
    $\{(q,r,k)\in\Q\times(\Q\cap(0,1))\times\N^+\mid q<g(r,k)\}$
    is an r.e.~set.
  \item
    For every $T\in(0,1)$,
    there exist $a\in\N$, $k_0\in\N^+$, and $t\in(T,1)$ 
    such that, for every $k\ge k_0$ and every $x\in(T,t)$,
    $g(x,k)-g(T,k)\le 2^a(x-T)$.
  \item
    For every $T\in(0,1)$,
    there exist $b\in\N$ and $k_1\in\N^+$
    such that, for every $k\ge k_1$,
    \begin{equation*}
      2^{-\frac{\abs{p_{k+1}}}{T}-b}\le
      g(T,k+1)-g(T,k).
    \end{equation*}
\end{enumerate}
Then, for every $T\in(0,1)$,
if $f(T)$ is right-computable
then $T$ is weakly Chaitin $T$-random.
\end{theorem}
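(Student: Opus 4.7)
The plan is to mimic the Chaitin-style proof of Theorem~\ref{main}, abstracting the properties of $Z$ and $Z_k$ into the conditions (i)--(iv). The goal is to exhibit an algorithm which, given $T_n$, outputs a finite binary string $s_n^*$ whose program-size complexity is at least $T(n-a-b)-O(1)$; the bound $H(T_n)\ge Tn-O(1)$ then follows from $H(s_n^*)\le H(T_n)+O(1)$.

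Fix $T\in(0,1)$ with $f(T)$ right-computable, and let $a,k_0,t$ come from (iii) at $T$ and $b,k_1$ from (iv) at $T$; set $K:=\max(k_0,k_1)$. Given $T_n$, one computes the rationals $\bar{T}_n$ and $T^+:=\bar{T}_n+2^{-n}$, which lies in $\Q\cap(T,t)\subset\Q\cap(0,1)$ once $n$ is sufficiently large. Because $f$ is strictly increasing, $f(T^+)>f(T)$, so by (i) and (iv) one has $g(T^+,k)>f(T)$ for all sufficiently large $k$. Exploiting the r.e.\ set in (ii) together with the r.e.\ set of rationals strictly exceeding $f(T)$ (provided by right-computability of $f(T)$), one effectively searches for the first triple $(k,q,s)$ with $k\ge K$, $s>f(T)$, and $s<q<g(T^+,k)$; call the resulting index $k_n$. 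The search terminates, and by construction $g(T^+,k_n)>f(T)$.

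Now (iii) forces $g(T^+,k_n)-g(T,k_n)\le 2^a(T^+-T)\le 2^{a-n}$, whence $g(T,k_n)>f(T)-2^{a-n}$. Since $g(T,\cdot)$ is strictly increasing for indices $\ge k_1$ (by (iv)) and converges to $f(T)$ (by (i)), the tail identity $f(T)-g(T,k_n)=\sum_{j>k_n}(g(T,j)-g(T,j-1))$ combined with (iv) yields $f(T)-g(T,k_n)\ge 2^{-b}\sum_{j>k_n}2^{-\abs{p_j}/T}$. Combining these two estimates gives $\sum_{j>k_n}2^{-\abs{p_j}/T}<2^{a+b-n}$, and therefore $\abs{p_j}>T(n-a-b)$ for every $j>k_n$. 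Consequently, every program $p\in\Dom U$ with $\abs{p}\le T(n-a-b)$ is already enumerated inside $\{p_1,\dots,p_{k_n}\}$.

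From $T_n$ one therefore effectively computes $k_n$, the finite set $S_n:=\{U(p_1),\dots,U(p_{k_n})\}$, and then the lexicographically first binary string $s_n^*\notin S_n$. Every $s\in\X$ with $H(s)\le T(n-a-b)$ is of the form $U(p)$ for some $p$ of length at most $T(n-a-b)$ and so lies in $S_n$; hence $H(s_n^*)>T(n-a-b)$. Since $s_n^*$ is a total recursive function of $T_n$, $H(s_n^*)\le H(T_n)+O(1)$, and therefore $H(T_n)\ge Tn-O(1)$, establishing weak Chaitin $T$-randomness of $T$. The delicate step is the effective search: because only upper, not lower, approximations to $f(T)$ are available, one cannot directly detect when $g(T,k)$ itself becomes close to $f(T)$; the trick is to detect when $g(T^+,k)$ crosses an upper bound on $f(T)$ at the rational $T^+>T$, and then transfer this one-sided approximation back to $T$ using the Lipschitz bound (iii).
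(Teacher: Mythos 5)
Your proposal is correct and follows essentially the same route as the paper's proof (the paper proves the Chaitin-random version, Theorem~\ref{gfpCTr}, and obtains Theorem~\ref{gfpwCTr} by the simplification you carry out directly): use right-computability of $f(T)$ together with condition (ii) to effectively find $k_n$ with $g(0.T_n+2^{-n},k_n)$ above an upper rational bound on $f(T)$, transfer this to $f(T)-g(T,k_n)<2^{a-n}$ via the Lipschitz condition (iii), bound the tail via (iv) to conclude $\abs{p_j}>T(n-a-b)$ for $j>k_n$, and output a string outside $\{U(p_1),\dots,U(p_{k_n})\}$. The only cosmetic difference is that your map $T_n\mapsto s_n^*$ is partial recursive rather than total recursive, which is all that is needed.
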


\begin{proof}
The proof of Theorem~\ref{gfpwCTr} is obtained
by slightly simplifying the proof of Theorem~\ref{gfpCTr} below.
\end{proof}

\begin{theorem}\label{gfpCTr}
Let $f\colon (0,1)\to\R$.
Suppose that $f$ is increasing and
there exists $g\colon (0,1)\times\N^+\to\R$ which satisfies
the following four conditions:
\begin{enumerate}
  \item
    $\lim_{k\to\infty} g(T,k)=f(T)$ for every $T\in(0,1)$.
  \item
    $\{(q,r,k)\in\Q\times(\Q\cap(0,1))\times\N^+\mid q<g(r,k)\}$
    is an r.e.~set.
  \item
    For every $T\in(0,1)$,
    there exist $a\in\N$, $k_0\in\N^+$, and $t\in(T,1)$ 
    such that, for every $k\ge k_0$ and every $x\in(T,t)$,
    $g(x,k)-g(T,k)\le 2^a(x-T)$.
  \item
    For every $T\in(0,1)$,
    there exist $b\in\N$, $c\in\N^+$, and $k_1\in\N^+$
    such that, for every $k\ge k_1$,
    \begin{equation*}
      |p_{k+1}|^c2^{-\frac{\abs{p_{k+1}}}{T}-b}\le
      g(T,k+1)-g(T,k).
    \end{equation*}
\end{enumerate}
Then, for every $T\in(0,1)$,
if $f(T)$ is right-computable
then $T$ is Chaitin $T$-random.
\end{theorem}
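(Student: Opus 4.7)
The proof plan is by contradiction. Assume $f(T)$ is right-computable but $T$ fails to be Chaitin $T$-random, so there exists $c_0\in\N$ with $H(T_n)\le Tn+c_0$ for infinitely many $n$. Taking such an $n$, let $p$ be a shortest $U$-program for $T_n$. The aim is to show that $p$, together with $O(\log n)$ bits of overhead and the right-computable approximation $\{q_m\}\downarrow f(T)$, yields an effective approximation to $f(T)$ too precise to be consistent with the lower bound on the tail of $g(T,\cdot)$ supplied by condition (iv).

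First, I would set up the approximations. By right-computability fix a recursive sequence $q_m\downarrow f(T)$. Condition (ii) makes $g(r,k)$ uniformly left-computable in rational $r\in(0,1)$ and $k\in\N^+$; together with (i) this enumerates $g(r,k)\uparrow f(r)$ effectively. Given $T_n$, form $\tau_n=0.T_n+2^{-n}$; then $T<\tau_n<T+2^{-n}$, so monotonicity of $f$ and condition (iii) (for $n$ large enough that $\tau_n\in(T,t)$) give $f(T)\le f(\tau_n)\le f(T)+2^a\cdot 2^{-n}$. Thus from $p$ one computes $T_n$, hence $\tau_n$, and effectively left-enumerates $f(\tau_n)$; waiting until $g(\tau_n,k)>q_m$ yields a lower bound for $f(T)$ within $2^a\cdot 2^{-n}$ of the right-computed upper bound $q_m$. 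The total input needed to produce a two-sided approximation of $f(T)$ at precision $O(2^{-n})$ is $p$ plus $O(\log n)$ auxiliary bits encoding $n$, $m$, and a constant-size machine, giving an effective description of $f(T)$ to precision $2^{-n}$ of total length at most $Tn+c_0+O(\log n)$.

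Next, I would bound below the information content of $f(T)$ to precision $2^{-n}$. By condition (iv), $f(T)-g(T,k)\ge \sum_{j\ge k}|p_{j+1}|^c\, 2^{-|p_{j+1}|/T-b}$, so reaching precision $2^{-n}$ forces the enumeration of $\Dom U$ to have swept past every program $p_j$ for which $|p_j|^c 2^{-|p_j|/T-b}>2^{-n}$; the polynomial factor pushes this cut-off up to $|p_j|$ of order $Tn+cT\log_2 n+O(1)$. Knowing the enumeration of $\Dom U$ up to this cut-off is equivalent to knowing the halting structure of $U$ on programs of length up to $Tn+cT\log_2 n+O(1)$, and by a standard argument in the spirit of Chaitin's randomness of $\Omega$ adapted to the partition function at temperature $T$ (as in the proof of Theorem~\ref{cprpffe}) this halting information has incompressibility at least $Tn+cT\log_2 n-O(\log n)$, conditional on $n$.

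The contradiction is then $Tn+c_0+O(\log n)\ge Tn+cT\log_2 n-O(\log n)$, i.e.\ $cT\log_2 n\le O(\log n)+c_0$, which fails for all sufficiently large $n$ since $cT>0$. Therefore $T$ is Chaitin $T$-random. The main obstacle, and the reason this argument is more delicate than that of Theorem~\ref{gfpwCTr}, is the precise extraction of the $cT\log_2 n$ savings from the polynomial factor $|p_{k+1}|^c$ in condition (iv): the gain must survive the logarithmic overhead both in encoding $n$ and in the Chaitin-random lower bound on the halting information. This requires careful Kraft--Chaitin bookkeeping, essentially assigning a code of length $|p_{k+1}|/T+b-c\log_2|p_{k+1}|+O(\log n)$ to each candidate $s$ witnessing the threshold event $g(0.s+2^{-|s|},k)>q_m$ at stage $k$, with Kraft mass controlled by telescoping the lower bound of (iv) against $f(T)-g(T,k_1)$.
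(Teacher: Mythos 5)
Your core mechanism is the same as the paper's: use $T_n$ (via $0.T_n+2^{-n}$ together with conditions (i)--(iii) and the right-computing sequence for $f(T)$) to locate a stage $k_e$ with $f(T)-g(T,k_e)<2^{a-n}$, then invoke condition (iv) to conclude that $p_1,\dots,p_{k_e}$ already contains every program $p$ with $\abs{p}^c2^{-\abs{p}/T-b}>2^{a-n}$, i.e.\ every program of length below roughly $Tn+cT\log_2 n-O(1)$, and diagonalize against their outputs. Your computation of that cut-off is correct. The gap is in the last step: the inequality $cT\log_2 n\le c_0+O(\log n)$ that you present as the contradiction is \emph{not} false for large $n$. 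The $O(\log n)$ overheads you budget --- a self-delimiting encoding of $n$ and $m$ on the description side, and the $-O(\log n)$ slack in the ``halting information is incompressible given $n$'' lower bound --- each hide at least one full $\log_2 n$, whereas your gain $cT\log_2 n$ has coefficient $cT$, which is strictly less than $1$ whenever $c=1$ and $T<1$; and $c=1$ is exactly what condition (iv) delivers in the intended application to $E(T)$ and $S(T)$. So both sides of your final inequality are $Tn$ plus a term of order $\log n$ and nothing is contradicted. You yourself flag this as ``the main obstacle'' and gesture at Kraft--Chaitin bookkeeping, but the bookkeeping is never carried out, and the sketch at the end (coding each candidate $s$ with length $\abs{p_{k+1}}/T+b-c\log_2\abs{p_{k+1}}+O(\log n)$) reintroduces the same $O(\log n)$ that kills the count.

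The repair --- and this is what the paper does --- is to eliminate every logarithmic overhead rather than try to beat it. Do not encode $n$: it is $\abs{T_n}$, read off the input. Do not encode $m_e$ or $k_e$: they are found by an effective search given $T_n$, using condition (ii) and the recursive upper approximations $h(m)$ of $f(T)$. The entire construction is then a single partial recursive map $\Psi(T_n)=s$ with $s\notin\{\,U(p_i)\mid i\le k_e\,\}$, so the invariance theorem gives $H(s)\le H(T_n)+c_\Psi$ with a genuine additive constant. Running the argument directly instead of by contradiction, condition (iv) gives $cT\log_2 H(s)-(a+b)T<H(s)-Tn$; feeding the resulting crude bound $H(s)>Tn-O(1)$ back into its own left-hand side yields $cT\log_2 n<H(s)-Tn+O(1)\le H(T_n)-Tn+O(1)$, whence $H(T_n)-Tn\to\infty$. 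As written, with the contradiction framing and $O(\log n)$ overheads in place, your argument does not close.
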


\begin{proof}
Suppose that $T\in(0,1)$ and $f(T)$ is right-computable.
Then there exists a total recursive function
$h\colon\N^+\to\Q$
such that
$f(T)\le h(m)$ for all $m\in\N^+$ and
$\lim_{m\to\infty} h(m)=f(T)$.

Since the condition (iii) holds for $g$,
there exist $a\in\N$, $k_0\in\N^+$, and $t\in(T,1)$ such that
\begin{equation}\label{mvtu}
  g(x,k)-g(T,k)\le 2^a(x-T)
\end{equation}
for every $k\ge k_0$ and every $x\in(T,t)$.
We choose any one $n_0\in\N^+$ such that
$0.T_n+2^{-n}<t$ for all $n\ge n_0$.
Such $n_0$ exists since $T<t$ and $\lim_{n\to\infty} 0.T_n+2^{-n}=T$.
Since $T_n$ is the first $n$ bits of
the base-two expansion of $T$ with infinitely many zeros,
we further see that $T<0.T_n+2^{-n}<t$ for all $n\ge n_0$.

On the other hand,
since the condition (iv) holds for $g$,
there exist $b\in\N$, $c\in\N^+$, and $k_1\in\N^+$ such that
\begin{equation*}
  |p_{k+1}|^c2^{-\frac{\abs{p_{k+1}}}{T}-b}\le
  g(T,k+1)-g(T,k)
\end{equation*}
for every $k\ge k_1$.
Without loss of generality,
we can assume that $k_1=k_0$.
Thus,
since $g(T,k)$ is increasing on $k$ with $k\ge k_0$
and the condition (i) holds,
\begin{equation}\label{ts}
  \abs{p_{i}}^c2^{-\frac{\abs{p_{i}}}{T}-b}<f(T)-g(T,k)
\end{equation}
if $i>k\ge k_0$.

Now,
given $T_n$ with $n\ge n_0$,
one can effectively find $k_e,m_e\in\N^+$ such that
$k_e\ge k_0$ and $h(m_e)<g(0.T_n+2^{-n},k_e)$.
This is possible
because
$f(T)<f(0.T_n+2^{-n})$,
$\lim_{k\to\infty}g(0.T_n+2^{-n},k)=f(0.T_n+2^{-n})$,
and the condition (ii) holds for $g$.
It follows from $f(T)\le h(m_e)$
and \eqref{mvtu} that
$f(T)-g(T,k_e)<g(0.T_n+2^{-n},k_e)-g(T,k_e)\le 2^{a-n}$.
It follows from \eqref{ts} that,
for every $i>k_e$,
$\abs{p_{i}}^c2^{-\frac{\abs{p_{i}}}{T}-b}<2^{a-n}$
and therefore
$cT\log_2\abs{p_i}-(a+b)T<\abs{p_i}-Tn$.
Thus,
by calculating the set $\{\,U(p_i)\mid i\le k_e\,\}$
and picking any one finite binary string $s$ which is not in this set,
one can then obtain $s\in\X$ such that
$cT\log_2H(s)-(a+b)T<H(s)-Tn$.

Hence,
there exists a partial recursive function
$\Psi\colon\X\to\X$
such that
\begin{equation*}
  cT\log_2H(\Psi(T_n))-(a+b)T<H(\Psi(T_n))-Tn
\end{equation*}
for all $n\ge n_0$.
Applying this inequality to itself, we have
$cT\log_2n<H(\Psi(T_n))-Tn+O(1)$,
for all $n\in\N^+$.
On the other hand,
using \eqref{eq: k}, there is $c_\Psi\in\N$ such that
$H(\Psi(T_n))\le H(T_n)+c_\Psi$
for all $n\ge n_0$.
It follows that $cT\log_2n<H(T_n)-Tn+O(1)$.
Hence, $T$ is Chaitin $T$-random.
\end{proof}

\begin{theorem}\label{gfpTc1}
Let $f\colon (0,1)\to\R$.
Suppose that $f$ is increasing and
there exists $g\colon (0,1)\times\N^+\to\R$
which satisfies the following three conditions:
\begin{enumerate}
  \item For every $T\in(0,1)$,
    $\lim_{k\to\infty} g(T,k)=f(T)$.
  \item For every $T_1,T_2\in(0,1)$ with $T_1<T_2$,
    there exists $k_0\in\N^+$ such that,
    for every $k\ge k_0$ and every $x\in[T_1,T_2]$,
    $g(x,k)\le f(x)$.
  \item The set
    $\{(q,r,k)\in\Q\times(\Q\cap(0,1))\times\N^+\mid q<g(r,k)\}$
    is r.e.
\end{enumerate}
Then, for every $T\in(0,1)$,
if $f(T)$ is right-computable
then $T$ is also right-computable.
\end{theorem}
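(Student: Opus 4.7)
The plan is to show that $T$ is right-computable by exhibiting an r.e.\ enumeration of $\{r\in\Q\cap(0,1) \mid T<r\}$, which by item (ii) of the preliminaries on right-computability suffices. Since $f(T)$ is right-computable, I fix a total recursive $h\colon\N^+\to\Q$ with $f(T)\le h(m)$ for every $m$ and $h(m)\to f(T)$. Because $f$ is strictly increasing in the paper's sense, $T<r$ is equivalent to $f(T)<f(r)$, so the task reduces to an effective comparison of $f(T)$ with $f(r)$ across rationals $r$.

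I plan to enumerate $r\in\Q\cap(0,1)$ whenever there exist a rational $r_1\in(0,1)$, a rational $q$, and integers $k,m\in\N^+$ such that $r_1<r$, $h(m)<q$, and $q<g(r_1,k)$. By condition (iii) the last relation is r.e.\ in $(q,r_1,k)$ and the other conditions are decidable, so this produces an r.e.\ set of outputs. The completeness direction is immediate: if $T<r$, pick a rational $r_1\in(T,r)$; then $f(T)<f(r_1)$, so there is a rational $q\in(f(T),f(r_1))$; eventually $h(m)<q$ since $h(m)\to f(T)<q$, and by (i), $g(r_1,k)\to f(r_1)>q$, so $g(r_1,k)>q$ for $k$ sufficiently large. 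Note that (ii) plays no role in this direction.

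The correctness direction is the subtle point. Given such a witness, the chain $f(T)\le h(m)<q<g(r_1,k)$ yields $f(T)<g(r_1,k)$, and to conclude $T<r_1<r$ it would suffice to know $g(r_1,k)\le f(r_1)$. Condition (ii), applied to a rational interval $[r_1,r_2]$ with $r_1<r_2\le r$, produces a threshold $k_0\in\N^+$ beyond which $g(x,k)\le f(x)$ for every $x\in[r_1,r_2]$; in particular $g(r_1,k)\le f(r_1)$ whenever $k\ge k_0$. The hard part is that this $k_0$ is not effectively known, so a naive witness with small $k$ could correspond to a transient overshoot $g(r_1,k)>f(r_1)$ in which $r_1\le T$, spuriously outputting $r$. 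I expect to handle this by refining the enumeration to carry an auxiliary rational $r_2\in(r_1,r]$ and by exploiting the uniformity of $k_0$ over the compact interval $[r_1,r_2]$ granted by (ii), arranging that only witnesses in the post-overshoot regime are retained; the delicate point is to verify that the refinement still admits a witness for every $r>T$ while excluding every $r\le T$, which is the main technical balance I expect to strike.
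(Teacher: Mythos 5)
Your proposal stalls exactly where the real content of the proof lies. You correctly isolate the soundness problem --- a witness $h(m)<q<g(r_1,k)$ with $k$ below the threshold of condition (ii) could reflect a transient overshoot $g(r_1,k)>f(r_1)$ with $r_1\le T$, spuriously enumerating some $r\le T$ --- but you do not resolve it; you only announce that you expect to ``strike the balance.'' Moreover, the refinement you sketch (carrying an auxiliary $r_2\in(r_1,r]$ and invoking the uniformity of $k_0$ over $[r_1,r_2]$) does not, as stated, escape the difficulty: the threshold $k_0$ that (ii) supplies for a pair $(r_1,r_2)$ is still not effectively obtainable from $(r_1,r_2)$, so quantifying over all such pairs leaves you with exactly the non-effectivity you started with. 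As it stands the argument is incomplete at its only nontrivial step.

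The missing idea is that an r.e.\ set need only be enumerated by \emph{some} machine, and that machine may contain finitely much non-effectively-obtained advice as a hard-coded constant. The paper fixes once and for all two rationals $t_1<T<t_2$ and takes the \emph{single} threshold $k_0$ that condition (ii) furnishes for the one interval $[t_1,t_2]$; since $k_0$ is just one natural number, it can be built into the enumerating machine. For $r\in\Q\cap[t_1,t_2]$ one then has the clean equivalence: $T<r$ if and only if $\exists\,m\;\exists\,k\ge k_0\;h(m)<g(r,k)$. The forward direction uses (i) and the monotonicity of $f$ exactly as in your completeness argument; the backward direction is the chain $f(T)\le h(m)<g(r,k)\le f(r)$, where the last inequality is precisely (ii) applied at the point $x=r\in[t_1,t_2]$ with $k\ge k_0$, whence $f(T)<f(r)$ and so $T<r$. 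By (iii) the condition is r.e., and adjoining all rationals above $t_2$ (a decidable set) yields an r.e.\ enumeration of $\{\,r\in\Q\mid T<r\,\}$. In particular no auxiliary rational $r_1<r$ or $r_2$ is needed once the single interval containing $T$ in its interior is fixed.
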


\begin{proof}
Suppose that $T\in(0,1)$.
We choose any $t_1,t_2\in\Q$ with $0<t_1<T<t_2<1$.
Then,
since the condition (ii) holds for $g$,
there exists $k_0\in\N^+$ such that $g(x,k)\le f(x)$
for every $k\ge k_0$ and every $x\in[t_1,t_2]$.
Suppose further that $f(T)$ is right-computable.
Then there exists a total recursive function
$h\colon\N^+\to\Q$
such that
$f(T)\le h(m)$ for all $m\in\N^+$ and
$\lim_{m\to\infty} h(m)=f(T)$.
Thus,
since $f$ is increasing and the condition (i) holds for $g$,
we see that,
for every $r\in\Q\cap[t_1,t_2]$,
$T<r$ if and only if
$\exists\,m\;\exists\,k\ge k_0\;h(m)<g(r,k)$.
Since the condition (iii) holds for $g$,
the set
$\{\,r\in\Q\cap[t_1,t_2]\mid
\exists\,m\;\exists\,k\ge k_0\;h(m)<g(r,k)\,\}$
is r.e.~and therefore
the set $\{\,r\in\Q\cap[t_1,t_2]\mid T<r\,\}$ is r.e.
It follows from $T\in(t_1,t_2)$ that $T$ is right-computable.
\end{proof}

\begin{theorem}\label{gfpTc2}
Let $f\colon (0,1)\to\R$.
Suppose that there exists $g\colon (0,1)\times\N^+\to\R$
which satisfies the following six conditions:
\begin{enumerate}
  \item For every $T\in(0,1)$, $\lim_{k\to\infty} g(T,k)=f(T)$.
  \item For every $T\in(0,1)$,
    there exists $k_0\in\N^+$ such that, for every $k\ge k_0$,
    $g(T,k)<f(T)$.
  \item For every $T\in(0,1)$,
    there exist $a\in\N$, $k_1\in\N^+$, and $t\in(T,1)$
    such that, for every $k\ge k_1$ and every $x\in(T,t)$,
    $g(x,k)-g(T,k)\ge 2^{-a}(x-T)$.
  \item For every $T\in(0,1)$,
    there exist $b\in\N$, $c\in\N$, and $k_2\in\N^+$
    such that, for every $k\ge k_2$,
    \begin{equation*}
      g(T,k+1)-g(T,k)\le
      \abs{p_{k+1}}^{b}
      2^{-\abs{p_{k+1}}/T+c}.
    \end{equation*}
  \item For each $k\in\N^+$,
    the mapping $(0,1)\ni T\mapsto g(T,k)$ is
    a
    continuous function.
  \item
    $\{(q,r,k)\in\Q\times(\Q\cap(0,1))\times\N^+\mid q>g(r,k)\}$
    is an r.e.~set.
\end{enumerate}
Then, for every $T\in(0,1)$,
if $f(T)$ is left-computable and $T$ is right-computable,
then $T$ is $T$-compressible.
\end{theorem}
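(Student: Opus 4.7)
I will construct a prefix-free computer $M$ such that for each sufficiently large $n$ there is a code $w_n$ with $M(w_n)=T_n$ and $\abs{w_n}\le Tn+o(n)$; the optimality of $U$ will then give $H(T_n)\le Tn+o(n)$. To set things up, let $h\colon\N^+\to\Q$ be recursive with $h(m)\le f(T)$ and $h(m)\to f(T)$ (from left-computability of $f(T)$), and $\sigma\colon\N^+\to\Q$ recursive with $\sigma(m)>T$ and $\sigma(m)\to T$ (from right-computability of $T$). For the fixed $T$, collect the constants $a,b,c\in\N$ and an index $k^{*}\ge\max\{k_0,k_1,k_2\}$ from (ii)--(iv), and a rational $t\in\Q\cap(T,1)$ (permissible by the continuity (v)).

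The two analytic ingredients come from (iv) and (iii). Summing the per-step bound (iv) together with the Kraft relation $\sum_j 2^{-\abs{p_j}}\le 1$ yields, for $k\ge k^{*}$,
\[
f(T)-g(T,k)\le 2^c\sum_{j>k}\abs{p_j}^b\,2^{-\abs{p_j}/T},
\]
and I define $k(n)$ as the least $k\ge k^{*}$ for which this quantity is at most $2^{-(n+a+2)}$. The local expansion (iii) supplies $g(r,k)\ge g(T,k)+2^{-a}(r-T)$ for $r\in\Q\cap(T,t)$ and $k\ge k^{*}$; combined with the tail estimate at $k=k(n)$, every rational $r\in\Q\cap(T,t)$ with $r-T>2^{-n-1}$ forces $g(r,k(n))>f(T)$ strictly. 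Contrapositively, any rational $r\in\Q\cap(T,t)$ whose upper-bound computation from (vi) at stage $k(n)$ eventually drops below some $h(j)\le f(T)$ already satisfies $r-T\le 2^{-n-1}$.

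I then define $M$: on input $w_n$, which encodes the pair $(n,k(n))$ prefix-freely, $M$ dovetails three processes---the enumeration $h(j)\nearrow f(T)$, the enumeration $\sigma(m)\searrow T$ producing rationals $>T$, and, for each rational $r$ in a fixed recursive enumeration of $\Q\cap(0,t)$, the upper-bound computation of $g(r,k(n))$ from (vi). At stage $s$ it maintains $q_s=\max_{j\le s}h(j)$ and certifies a rational $r$ once its upper bound falls below $q_s$; $M$ then outputs the first $n$ bits of the smallest certified rational already emerged from $\sigma$. Correctness is the contrapositive above: every certified $r$ lies within $2^{-n-1}$ of $T$, so its first $n$ bits are $T_n$. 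A prefix-free encoding of $(n,k(n))$ costs $O(\log n)+\lceil\log_2 k(n)\rceil+O(\log\log k(n))$ bits, so the whole problem reduces to proving
\[
\log_2 k(n)\le Tn+o(n).
\]

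This last bound is the main obstacle, and it is precisely where the partition-function weighting $2^{-\abs{p}/T}$ at temperature $T$ in (iv) must translate into the compression rate $T$. The heuristic is clean: any increment $g(T,k+1)-g(T,k)\ge 2^{-(n+a+2)}$ forces, via (iv), $\abs{p_{k+1}}\le T(n+a+2+c)+O(\log n)=Tn+o(n)$, and the number of such programs in $\Dom U$ is at most $2^{Tn+o(n)}$. A plain Kraft-style tail bound on $f(T)-g(T,k)$, however, gives only precision $2^{-L(1-T)/T}$ for a length cutoff $L$, which forces $L\ge Tn/(1-T)$ and only the insufficient $\log_2 k(n)\lesssim Tn/(1-T)$. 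To close this gap, one must split the tail into contributions from the ``large'' increments (those $\ge 2^{-(n+a+2)}$, which determine $k(n)$ and are counted via their short programs) and the ``small'' ones (from programs of length exceeding $Tn$), and absorb the latter using the precise $1/T$ exponent in (iv) together with the convergence of $\sum_j\abs{p_j}^b 2^{-\abs{p_j}/T}$; it may also be necessary to use the right-computability of $T$ actively inside the dovetailing rather than encoding $T$'s precision separately. Carrying this split through quantitatively tightly is the crux of the argument.
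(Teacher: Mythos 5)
Your overall architecture is the right one (use (vi), (v) and condition (iii) to convert a sufficiently good lower approximation of $f(T)$ at a suitable cutoff $k$ into a rational upper approximation of $T$ accurate to $2^{-n+O(1)}$, then read off $T_n$), but the proof has a genuine gap at exactly the point you flag as ``the crux'': you reduce everything to showing $\log_2 k(n)\le Tn+o(n)$, and you do not prove this. Worse, for your definition of $k(n)$ the bound is not true in general: $k(n)$ is a position in the fixed recursive enumeration $p_1,p_2,\dots$ of $\Dom U$, and for the tail $\sum_{j>k}\abs{p_j}^b2^{-\abs{p_j}/T}$ to be small, \emph{every} program of length roughly $\le Tn$ must already occur among $p_1,\dots,p_k$. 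Nothing forces short programs to appear early in the enumeration (their halting times are unbounded), so $k(n)$ --- and hence $\log_2 k(n)$ --- admits no bound in terms of $n$ at all. Counting programs instead of indexing them runs into the $2^{-L(1-T)/T}$ precision loss you yourself compute, and the ``split into large and small increments'' you propose is left entirely unexecuted; so the decisive quantitative step is missing.

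The paper closes this gap with a different advice string: fix a computable $u$ with $T<u<1$, set $\beta=\sum_i\abs{p_i}^b2^{-\abs{p_i}/u}$ (convergent, by the analogue of Lemma~5.1), and give the machine the first $\lceil Tn/u\rceil$ bits of $\beta$. From these one \emph{effectively finds} some cutoff $k_e$ (possibly astronomically large, but never written down) with $\sum_{i>k_e}\abs{p_i}^b2^{-\abs{p_i}/u}<2^{-Tn/u}$; raising this to the power $u/T\ge 1$ via $x^z+y^z\le(x+y)^z$ converts it into the needed temperature-$T$ tail bound $\sum_{i>k_e}\abs{p_i}^b2^{-\abs{p_i}/T}<2^{-n}$, whence $f(T)-g(T,k_e)<2^{c-n}$ by (iv). This yields $T/u$-compressibility at advice cost $Tn/u+o(n)$, and letting $u\to1$ over computable values gives $T$-compressibility. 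You would need this (or an equivalent) idea to complete your argument. A secondary, fixable slip: from $0<r-T<2^{-n-1}$ it does not follow that the first $n$ bits of $r$ equal $T_n$ (carries can propagate); the paper accounts for this with $a+c+3$ extra advice bits distinguishing the $O(2^{a+c})$ candidates for $T_n$.
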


\begin{proof}
Suppose that $T\in(0,1)$.
Since the condition (ii) holds for $g$,
there exists $k_0\in\N^+$ such that
\begin{equation}\label{gltf}
  g(T,k)<f(T)
\end{equation}
for every $k\ge k_0$.    
Since the condition (iii) holds for $g$,
there exist $a\in\N$, $k_1\in\N^+$, and $t\in(T,1)$ such that
\begin{equation}\label{mvtl}
  g(x,k)-g(T,k)\ge 2^{-a}(x-T)
\end{equation}
for every $k\ge k_1$ and every $x\in(T,t)$.
Since the condition (iv) holds for $g$,
there exist $b\in\N$, $c\in\N$, and $k_2\in\N^+$
such that
\begin{equation}\label{mvtu2}
  g(T,k+1)-g(T,k)\le
  \abs{p_{k+1}}^{b}2^{-\abs{p_{k+1}}/T+c}
\end{equation}
for every $k\ge k_2$.
Without loss of generality,
we can assume that $k_0=k_1=k_2$.

Suppose further that
$T$ is right-computable and $f(T)$ is left-computable. 
Then there exists a total recursive function $A\colon\N^+\to\Q$ such that
$T< A(l)< t$ for all $l\in\N^+$ and $\lim_{l\to\infty} A(l)=T$,
and there exists a total recursive function $B\colon\N^+\to\Q$ such that
$B(m)\le f(T)$ for all $m\in\N^+$ and $\lim_{m\to\infty} B(m)=f(T)$.

Let $u$ be an arbitrary computable real number with $T<u<1$,
and let
$\beta=\sum_{i=1}^{\infty} \abs{p_{i}}^{b}2^{-\abs{p_{i}}/u}$.
Note that this limit exists and is weakly Chaitin $u$-random
(see
Theorem 3.2 (a) of \cite{T02} and
Theorem 3 (i) of \cite{T08CiE}).
Thus, the base-two expansion of $\beta$ contains
infinitely many zeros and
infinitely many ones.

Given $n$ and $\beta_{\lceil Tn/u\rceil}$
(i.e., the first $\lceil Tn/u\rceil$ bits of the base-two expansion of
$\beta - \lfloor \beta \rfloor$),
one can effectively find
$k_e\in\N^+$ such that $k_e\ge k_0$ and
\begin{equation*}
  0.\beta_{\lceil Tn/u\rceil}+\lfloor \beta \rfloor
  <\sum_{i=1}^{k_e} \abs{p_{i}}^{b}2^{-\frac{\abs{p_i}}{u}}.
\end{equation*}
This is possible since
$0.\beta_{\lceil Tn/u\rceil}+\lfloor \beta \rfloor<\beta$ and
$\lim_{k\to\infty}\sum_{i=1}^{k} \abs{p_{i}}^{b}2^{-\abs{p_i}/u}
=\beta$.
Since
$\beta-(0.\beta_{\lceil Tn/u\rceil}+\lfloor \beta \rfloor)
\le 2^{-\lceil Tn/u\rceil}\le 2^{-Tn/u}$,
it is then shown that
\begin{equation*}
  \sum_{i=k_e+1}^{\infty} \abs{p_{i}}^{b}2^{-\frac{\abs{p_i}}{u}}
  =\beta-\sum_{i=1}^{k_e} \abs{p_{i}}^{b}2^{-\frac{\abs{p_i}}{u}}
  <2^{-Tn/u}.
\end{equation*}
Raising both
ends
of this inequality to the power $u/T$ and
using the inequality $x^z+y^z\le (x+y)^z$
for real numbers $x,y>0$ and $z\ge 1$,
we have
\begin{equation*}
  \sum_{i=k_e+1}^{\infty} \abs{p_{i}}^{b}2^{-\frac{\abs{p_i}}{T}}
  \le
  \sum_{i=k_e+1}^{\infty}
  \abs{p_{i}}^{\frac{bu}{T}}2^{-\frac{\abs{p_i}}{T}}
  <2^{-n}.
\end{equation*}
Using \eqref{mvtu2} and the condition (i),
it follows that
\begin{equation}\label{tcnew}
  f(T)-g(T,k_e)
  <\sum_{i=k_e+1}^{\infty} \abs{p_{i}}^{b}2^{-\frac{\abs{p_i}}{T}+c}
  <2^{c-n}.
\end{equation}
On the other hand,
since the condition (v) holds for $g$,
$g(T,k_e)=\lim_{l\to\infty} g(A(l),k_e)$.
Obviously, $g(T,k_e)<f(T)$ by \eqref{gltf}.
Thus, since the condition (vi) holds for $g$,
one can then effectively find $l_e, m_e\in\N^+$ such that
$g(A(l_e),k_e)<B(m_e)$.
It follows from \eqref{tcnew} and \eqref{mvtl} that
\begin{equation*}
  2^{c-n}>f(T)-g(T,k_e)\ge B(m_e)-g(T,k_e)
  >g(A(l_e),k_e)-g(T,k_e)\ge 2^{-a}(A(l_e)-T).
\end{equation*}
Thus, $0<A(l_e)-T<2^{a+c-n}$.
Let $r_n$ be the first $n$ bits of the base-two expansion of
the rational number $A(l_e)$ with infinitely many zeros.
Then $\abs{\,A(l_e)-0.r_n\,}<2^{-n}$.
It follows from $\abs{\,T-0.T_n\,}<2^{-n}$ that
$\abs{\,0.T_n-0.r_n\,}<(2^{a+c}+2)2^{-n}$.
Hence,
$T_n
=r_n,\,r_n\pm 1,\,r_n\pm 2,\,\dots,\,r_n\pm (2^{a+c}+1)$,
where $T_n$ and $r_n$ are regarded as a dyadic integer.
Thus, there are still $2^{a+c+1}+3$ possibilities of $T_n$,
so that one needs only $a+c+3$ bits more in order to determine $T_n$.

Thus, there exists a partial recursive function
$\Phi\colon \N^+\times\X\times\X\to\X$
such that,
for every $n\in\N^+$, there exists $s\in\X$
with the properties that
$\abs{s}=a+c+3$ and $\Phi(n,\beta_{\lceil Tn/u\rceil},s)=T_n$.
It follows from \eqref{eq: fas} that
$H(T_n)
\le |\beta_{\lceil Tn/u\rceil}|+o(n)
\le Tn/u+o(n)$,
which implies that $T$ is $T/u$-compressible.
Since $u$ is an arbitrary computable real number with $T<u<1$,
it follows that $T$ is $T$-compressible.
\end{proof}

\section{The proofs of the main results}
\label{proof}

In this section
we complete the proofs of
our main results;
Theorems~\ref{fpt-free_energy},
\ref{fpt-energy},
and
\ref{fpt-entropy}.

\subsection{The proof of Theorem~\ref{fpt-free_energy}}
\label{proof-fpt-free_energy}

We
first
complete the proof of
Theorem~\ref{fpt-free_energy},
based on Theorems~\ref{tdr}, \ref{gfpwCTr}, \ref{gfpTc1}, and \ref{gfpTc2},
as follows.

Let $f\colon (0,1)\to\R$ with $f(T)=-F(T)$,
and let $g\colon (0,1)\times\N^+\to\R$ with $g(T,k)=-F_{k}(T)$.
First, it follows from Theorem~\ref{tdr} (ii) and (iii)
that $f$ is increasing.

Obviously, $\lim_{k\to\infty} g(T,k)=f(T)$ for every $T\in(0,1)$.
Using the mean value theorem we see that,
for every $T\in(0,1)$ and every $k\in\N^+$,
\begin{equation}\label{lnZ}
  \frac{2^{-\frac{\abs{p_{k+1}}}{T}}}{Z_{k+1}(T)}
  <
  \ln Z_{k+1}(T)-\ln Z_{k}(T)
  <
  \frac{2^{-\frac{\abs{p_{k+1}}}{T}}}{Z_{k}(T)}.
\end{equation}
It follows that,
for every $T\in(0,1)$ and every $k\in\N^+$,
$g(T,k)<g(T,k+1)$ and therefore $g(T,k)<f(T)$.
At this point,
the conditions (i) and (ii) of Theorem~\ref{gfpwCTr},
all conditions of Theorem~\ref{gfpTc1},
and the conditions (i), (ii), (v), and (vi) of Theorem~\ref{gfpTc2}
hold for $g$.

Using \eqref{lnZ} we see that,
for every $T\in(0,1)$ and every $k\in\N^+$,
\begin{equation*}
  \frac{T2^{-\frac{\abs{p_{k+1}}}{T}}}{Z_{k+1}(T)\ln 2}
  <
  g(T,k+1)-g(T,k)
  <
  \frac{T2^{-\frac{\abs{p_{k+1}}}{T}}}{Z_{k}(T)\ln 2}.
\end{equation*}
Thus,
the condition (iv) of Theorem~\ref{gfpwCTr}
and the condition (iv) of Theorem~\ref{gfpTc2}
hold for $g$.

Using the mean value theorem and Theorem~\ref{tdr} (i) and (iii),
we see that
\begin{equation}\label{diff-g-T}
  S_{k}(T)(x-T)\le g(x,k)-g(T,k)\le S_{k}(t)(x-T)
\end{equation}
for every $k\in\N^+$ and every $T,x,t\in(0,1)$ with $T<x<t$.
On the other hand, we see that,
for every $k\in\N^+$ and every $T\in(0,1)$,
\begin{equation*}
  E_{k+1}(T)-E_{k}(T)
  =
  \frac{Z_{k}(T)\abs{p_{k+1}}-W_{k}(T)}{Z_{k+1}(T)Z_{k}(T)}
  2^{-\frac{\abs{p_{k+1}}}{T}}.
\end{equation*}
Recall here that,
for every $T\in(0,1)$,
$\lim_{k\to\infty}Z_{k}(T)$ and $\lim_{k\to\infty}W_{k}(T)$
exist and are positive by Lemma~\ref{uniform} (i).
It follows from $\lim_{k\to\infty}\abs{p_{k+1}}=\infty$ that,
for every $T\in(0,1)$,
there exists $k_0\in\N^+$ such that,
for every $k\ge k_0$, $E_{k}(T)<E_{k+1}(T)$
and therefore $S_{k}(T)<S_{k+1}(T)$ by \eqref{lnZ}.
Using Theorem~\ref{tdr} (iii),
we see that,
for every $T\in(0,1)$,
there exists $k_1\in\N^+$ such that,
for every $k\ge k_1$,
$0<S_{k_1}(T)\le S_{k}(T)<S(T)$.
Thus, using \eqref{diff-g-T},
for every $T,t\in(0,1)$ with $T<t$,
there exists $k_2\in\N^+$ such that
$S_{k_2}(T)>0$ and
for every $k\ge k_2$ and every $x\in(T,t)$,
$S_{k_2}(T)(x-T)\le g(x,k)-g(T,k)<S(t)(x-T)$.
Therefore,
the condition (iii) of Theorem~\ref{gfpwCTr}
and the condition (iii) of Theorem~\ref{gfpTc2}
hold for $g$.

Thus,
Theorem~\ref{gfpwCTr}, Theorem~\ref{gfpTc1}, and Theorem~\ref{gfpTc2}
result in the following three theorems, respectively.

\begin{theorem}\label{fpwcTr-free_energy}
For every $T\in(0,1)$,
if $F(T)$ is left-computable
then $T$ is weakly Chaitin $T$-random.\qed
\end{theorem}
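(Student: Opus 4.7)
The plan is to deduce Theorem~\ref{fpwcTr-free_energy} as an immediate consequence of Theorem~\ref{gfpwCTr} applied to the auxiliary pair $f(T) := -F(T)$ and $g(T,k) := -F_k(T)$. This is the whole point of the preceding paragraph: once the four hypotheses of Theorem~\ref{gfpwCTr} are checked for this $(f,g)$, the result follows mechanically.

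The first thing I would be careful about is the sign convention. Since $F'(T) = -S(T)$ and $S(T) > 0$ on $(0,1)$ by Theorem~\ref{tdr}~(ii)--(iii), the free energy $F$ is strictly \emph{decreasing} on $(0,1)$, so it is $-F$ rather than $F$ that can play the role of the increasing function demanded by Theorem~\ref{gfpwCTr}. Correspondingly, the hypothesis that $F(T)$ is left-computable translates into the statement that $f(T) = -F(T)$ is right-computable, which is exactly what Theorem~\ref{gfpwCTr} requires.

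Next, I would appeal to the verifications already carried out in the text. With $f$ and $g$ as above: $f$ is increasing by Theorem~\ref{tdr}; condition~(i) of Theorem~\ref{gfpwCTr} is immediate from $\lim_k F_k(T) = F(T)$; condition~(ii) is the r.e.~condition on rational approximations, which holds because $F_k(T) = -T\log_2 Z_k(T)$ is effectively computable in $(T,k)$; condition~(iv) (the lower bound on successive increments) follows from the two-sided estimate \eqref{lnZ}; and condition~(iii) (the Lipschitz-type upper bound) follows from the mean-value inequality \eqref{diff-g-T} together with the eventual strict positivity of $S_k(T)$. Applying Theorem~\ref{gfpwCTr} then yields the weak Chaitin $T$-randomness of $T$.

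I do not anticipate any real obstacle; the substantive content has been absorbed into Theorem~\ref{gfpwCTr} and into the preceding condition-verification for the free-energy pair. The only delicate point is the sign flip, which must be tracked to convert between left-computability of $F(T)$ and right-computability of $-F(T)$, and to match the monotonicity direction needed by the general fixed-point machinery.
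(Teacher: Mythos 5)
Your proposal is correct and follows essentially the same route as the paper: Section~\ref{proof-fpt-free_energy} applies Theorem~\ref{gfpwCTr} to exactly the pair $f(T)=-F(T)$, $g(T,k)=-F_k(T)$, with the monotonicity from Theorem~\ref{tdr}, conditions (i), (ii), (iv) from \eqref{lnZ}, and condition (iii) from \eqref{diff-g-T}. The only minor imprecision is that for condition (iii) of Theorem~\ref{gfpwCTr} what is needed from \eqref{diff-g-T} is the upper bound $S_k(t)<S(t)$ (boundedness), not the strict positivity of $S_k$, which is only relevant for the companion condition in Theorem~\ref{gfpTc2}.
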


\begin{theorem}\label{fpTc1-free_energy}
For every $T\in(0,1)$,
if $F(T)$ is left-computable
then $T$ is right-computable.\qed
\end{theorem}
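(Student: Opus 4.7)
The plan is to invoke Theorem~\ref{gfpTc1} with the same setup already established in the proof of Theorem~\ref{fpt-free_energy}. Concretely, I would take $f\colon(0,1)\to\R$ to be $f(T)=-F(T)$ and $g\colon(0,1)\times\N^+\to\R$ to be $g(T,k)=-F_k(T)$. The first thing to notice is the basic duality: $F(T)$ being left-computable is, by definition, exactly the statement that $f(T)=-F(T)$ is right-computable. So the conclusion of Theorem~\ref{gfpTc1} applied to this $f$ delivers the desired right-computability of $T$.

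Next I would check that the three hypotheses of Theorem~\ref{gfpTc1} indeed hold for this pair $(f,g)$, observing that all the work has effectively been done already in the earlier part of this subsection. The monotonicity of $f$ comes from Theorem~\ref{tdr} (ii) and (iii), since $f'(T)=S(T)>0$ on $(0,1)$. Condition (i), $\lim_{k\to\infty}g(T,k)=f(T)$ for every $T\in(0,1)$, is immediate from the definition of $F(T)$. Condition (ii) requires that, on each compact subinterval $[T_1,T_2]\subset(0,1)$, there be a $k_0$ above which $g(x,k)\le f(x)$ uniformly for $x\in[T_1,T_2]$; but the estimate \eqref{lnZ} already shows that $g(T,k)<f(T)$ for every $T\in(0,1)$ and every $k\in\N^+$, so this condition in fact holds with $k_0=1$, uniformly on all of $(0,1)$. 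Condition (iii), the recursive enumerability of $\{(q,r,k)\in\Q\times(\Q\cap(0,1))\times\N^+\mid q<g(r,k)\}$, follows because for any $r\in\Q\cap(0,1)$ and any $k\in\N^+$ the value $g(r,k)=r\log_2 Z_k(r)$ is a rational-logarithmic expression in finitely many terms $2^{-\abs{p_i}/r}$ with $p_i$ effectively enumerated, so $g(r,k)$ can be rationally approximated to any prescribed precision by a total recursive procedure, from which the r.e.\ nature of the strict-inequality set is standard.

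Having verified all hypotheses, I would then apply Theorem~\ref{gfpTc1} to conclude that right-computability of $f(T)$ forces right-computability of $T$. Translating back via $f(T)=-F(T)$ yields the theorem.

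There is essentially no obstacle: this theorem is structurally a corollary of Theorem~\ref{gfpTc1} using the very same $(f,g)$ that was already outfitted with all needed properties in the setup for Theorem~\ref{fpt-free_energy}. The only mild care is in noting that the strict inequality $g(T,k)<f(T)$ is enough to yield condition (ii) of Theorem~\ref{gfpTc1} uniformly, and that the computability of $F_k$ on rational arguments is genuinely effective so condition (iii) is r.e.\ rather than merely $\Sigma_1$ in some weaker sense.
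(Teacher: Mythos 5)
Your proposal is correct and follows essentially the same route as the paper: the paper's proof of Theorem~\ref{fpTc1-free_energy} is exactly the application of Theorem~\ref{gfpTc1} to $f(T)=-F(T)$ and $g(T,k)=-F_k(T)$, with monotonicity from Theorem~\ref{tdr}~(ii) and (iii), condition (ii) obtained (with $k_0=1$) from the strict inequality $g(T,k)<f(T)$ derived via \eqref{lnZ}, and condition (iii) from the effective computability of $Z_k(r)$ at rational arguments. Nothing essential differs from the paper's argument.
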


\begin{theorem}\label{fpTc2-free_energy}
For every $T\in(0,1)$,
if both $F(T)$ and $T$ are right-computable
then $T$ is $T$-compressible.\qed
\end{theorem}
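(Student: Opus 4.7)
The plan is to derive Theorem~\ref{fpTc2-free_energy} as an immediate corollary of Theorem~\ref{gfpTc2}, applied to the functions $f\colon(0,1)\to\R$ and $g\colon(0,1)\times\N^+\to\R$ defined by $f(T)=-F(T)$ and $g(T,k)=-F_{k}(T)$. The key translation is the elementary observation that a real number is right-computable if and only if its negative is left-computable; hence the hypothesis ``$F(T)$ is right-computable'' of Theorem~\ref{fpTc2-free_energy} is exactly synonymous with the hypothesis ``$f(T)$ is left-computable'' demanded by Theorem~\ref{gfpTc2}.

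All six conditions of Theorem~\ref{gfpTc2} have in fact already been verified for this particular $g$ in the preceding analysis of this subsection: condition (i) is the definition of $F(T)$; condition (ii) follows from the strict monotonicity $g(T,k)<g(T,k+1)$ obtained from inequality \eqref{lnZ}; condition (iv) is the upper bound in \eqref{lnZ} with exponents $b=c=0$; condition (iii) is the lower estimate $g(x,k)-g(T,k)\ge S_{k_2}(T)(x-T)$ produced by the mean value theorem applied to $F_k$ combined with Theorem~\ref{tdr}~(iii), which guarantees that $S_{k_2}(T)>0$ for some sufficiently large $k_2$; condition (v) is immediate since each $F_k(T)=-T\log_2 Z_k(T)$ is a continuous function of $T$; and condition (vi) follows from the recursive enumerability of $\Dom U$ together with the effective nature of the map $(T,k)\mapsto F_k(T)$ on rational arguments.

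Consequently the proof reduces to a single invocation: apply Theorem~\ref{gfpTc2} with the above $f$ and $g$, note that the hypotheses of Theorem~\ref{fpTc2-free_energy} supply precisely the left-computability of $f(T)$ and the right-computability of $T$, and conclude that $T$ is $T$-compressible. I expect no genuinely new obstacle at this stage, since the only substantive technical point — securing the strictly positive uniform lower bound $S_{k_2}(T)>0$ required for condition (iii), and controlling the consecutive increments of $F_k$ for condition (iv) via \eqref{lnZ} — has already been handled in the preceding paragraphs using Theorem~\ref{tdr} and Lemma~\ref{uniform}. Thus the proof of Theorem~\ref{fpTc2-free_energy} should be stated as a one-line appeal to Theorem~\ref{gfpTc2}.
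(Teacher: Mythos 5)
Your proposal is correct and is exactly the paper's argument: Theorem~\ref{fpTc2-free_energy} is obtained by applying Theorem~\ref{gfpTc2} to $f(T)=-F(T)$ and $g(T,k)=-F_k(T)$, all six conditions having been verified in the preceding paragraphs of Section~\ref{proof-fpt-free_energy}, with right-computability of $F(T)$ translated into left-computability of $-F(T)$. The only nitpick is that in condition (iv) the constant is not $c=0$ but rather any $c\in\N$ with $2^c\ge T/(Z_{k}(T)\ln 2)$ for large $k$, which exists since $Z_k(T)$ is bounded below by a positive constant; this is harmless because Theorem~\ref{gfpTc2} allows $b,c$ to depend on $T$.
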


Theorem~\ref{fpt-free_energy} follows immediately from
these three theorems.

\subsection{The proof of Theorem~\ref{fpt-energy}}
\label{proof-fpt-energy}

We
complete the proof of
Theorem~\ref{fpt-energy},
based on Theorems~\ref{tdr}, \ref{gfpCTr}, \ref{gfpTc1}, and \ref{gfpTc2},
as follows.

Let $f\colon (0,1)\to\R$ with $f(T)=E(T)$,
and let $g\colon (0,1)\times\N^+\to\R$ with $g(T,k)=E_{k}(T)$.
First, by Theorem~\ref{tdr} (ii) and (iii),
we see that $E'(T)=C(T)>0$ for every $T\in(0,1)$.
Thus $f$ is increasing.

Obviously, $\lim_{k\to\infty} g(T,k)=f(T)$ for every $T\in(0,1)$.
At this point,
the conditions (i) and (ii) of Theorem~\ref{gfpCTr},
the conditions (i) and (iii) of Theorem~\ref{gfpTc1},
and the conditions (i), (v), and (vi) of Theorem~\ref{gfpTc2}
hold for $g$.

We see that,
for every $k\in\N^+$ and every $T\in(0,1)$,
\begin{equation}\label{dEk}
  E_{k+1}(T)-E_{k}(T)
  =
  \frac{Z_{k}(T)\abs{p_{k+1}}-W_{k}(T)}{Z_{k+1}(T)Z_{k}(T)}
  2^{-\frac{\abs{p_{k+1}}}{T}}.
\end{equation}
Recall here that,
for every $T\in(0,1)$,
$\lim_{k\to\infty}Z_{k}(T)$ and $\lim_{k\to\infty}W_{k}(T)$
exist and are positive by Lemma~\ref{uniform} (i).
It follows from $\lim_{k\to\infty}\abs{p_{k+1}}=\infty$ that,
for every $T\in(0,1)$,
there exist $a\in\N$, $b\in\N$, and $k_0\in\N^+$ such that,
every $k\ge k_0$,
\begin{equation}\label{dgEk}
  \abs{p_{k+1}}2^{-\frac{\abs{p_{k+1}}}{T}-a}
  \le
  g(T,k+1)-g(T,k)
  \le
  \abs{p_{k+1}}2^{-\frac{\abs{p_{k+1}}}{T}+b}.
\end{equation}
Thus,
the condition (iv) of Theorem~\ref{gfpCTr}
and the condition (iv) of Theorem~\ref{gfpTc2}
hold for $g$.
It follows from \eqref{dgEk} that,
for every $T\in(0,1)$,
there exists $k_0\in\N^+$ such that,
every $k\ge k_0$,
$g(T,k)<g(T,k+1)$ and therefore $g(T,k)<f(T)$.
Thus,
the condition (ii) of Theorem~\ref{gfpTc2} holds for $g$.

Using Lemma~\ref{uniform} (ii) and (iii)
in addition to Lemma~\ref{uniform} (i),
we can show a stronger statement than the inequalities \eqref{dgEk}.
The stronger statement for the lower bound of \eqref{dgEk} is
needed here.
That is,
based on \eqref{dEk},
Lemma~\ref{uniform},
and $\lim_{k\to\infty}\abs{p_{k+1}}=\infty$,
we can show that,
for every $T_1,T_2\in(0,1)$ with $T_1<T_2$,
there exist $a\in\N$ and $k_0\in\N^+$ such that,
every $k\ge k_0$ and every $x\in[T_1,T_2]$,
\begin{equation*}
  \abs{p_{k+1}}2^{-\frac{\abs{p_{k+1}}}{x}-a}
  \le
  g(x,k+1)-g(x,k).
\end{equation*}
It follows that
the condition (ii) of Theorem~\ref{gfpTc1} holds for $g$.

Now,
using the mean value theorem and Theorem~\ref{tdr} (i),
we see that,
for every $k\in\N^+$ and every $T,x,t\in(0,1)$ with $T<x<t$,
there exists $y\in(T,x)$ such that
\begin{equation}\label{dgET}
  g(x,k)-g(T,k)=C_{k}(y)(x-T).
\end{equation}
On the other hand,
using \eqref{CkT}
we see that,
for every $k\in\N^+$ and every $T\in(0,1)$,
$C_{k+1}(T)-C_{k}(T)$ is calculated as
\begin{equation*}
\begin{split}
  &\frac{\ln 2}{T^2}
  \frac{2^{-\frac{\abs{p_{k+1}}}{T}}}{Z_{k+1}(T)}
  \Biggl[
    \abs{p_{k+1}}^2
    -
    \left\{
      \frac{W_{k+1}(T)}{Z_{k+1}(T)}
      +
      \frac{W_{k}(T)}{Z_{k}(T)}
    \right\}
    \abs{p_{k+1}} \\
    &\hspace*{32mm}+
    \left\{
      \frac{W_{k+1}(T)}{Z_{k+1}(T)}
      +
      \frac{W_{k}(T)}{Z_{k}(T)}
    \right\}
    \frac{W_{k}(T)}{Z_{k}(T)}
    -
    \frac{Y_{k}(T)}{Z_{k}(T)}
  \Biggr].
\end{split}
\end{equation*}
Thus,
based on Lemma~\ref{uniform} and
$\lim_{k\to\infty}\abs{p_{k+1}}=\infty$,
we can show that,
for every $T_1,T_2\in(0,1)$ with $T_1<T_2$,
there exist $a\in\N$ and $k_0\in\N^+$ such that,
every $k\ge k_0$ and every $y\in[T_1,T_2]$,
\begin{equation*}
  \abs{p_{k+1}}^2 2^{-\frac{\abs{p_{k+1}}}{y}-a}
  \le
  C_{k+1}(y)-C_{k}(y).
\end{equation*}
It follows from Theorem~\ref{tdr} (iii) that,
for every $T_1,T_2\in(0,1)$ with $T_1<T_2$,
there exist $a\in\N$ and $k_0\in\N^+$ such that,
every $k\ge k_0$ and every $y\in[T_1,T_2]$,
\begin{equation}\label{mCM}
  0<\min C_{k_0}([T_1,T_2])\le C_{k}(y)<\max C([T_1,T_2]),
\end{equation}
where
$\min C_{k_0}([T_1,T_2])
=\min\{\,C_{k_0}(z)\mid z\in[T_1,T_2]\,\}$
and
$\max C([T_1,T_2])
=\max\{\,C(z)\mid z\in[T_1,T_2]\,\}$.
In particular,
$\max C([T_1,T_2])$
exists.
This is because
the function $C(T)$ of $T$ is continuous on $(0,1)$
by Lemma~\ref{uniform} and \eqref{CkT}. 
It follows from \eqref{dgET} and \eqref{mCM} that,
for every $T,t\in(0,1)$ with $T<t$,
there exist $a\in\N$, $b\in\N$, and $k_0\in\N^+$ such that,
for every $k\ge k_0$ and every $x\in(T,t)$,
$2^{-a}(x-T)\le g(x,k)-g(T,k)<2^{b}(x-T)$.
Therefore,
the condition (iii) of Theorem~\ref{gfpCTr}
and the condition (iii) of Theorem~\ref{gfpTc2}
hold for $g$.

Thus,
Theorem~\ref{gfpCTr}, Theorem~\ref{gfpTc1}, and Theorem~\ref{gfpTc2}
result in the following three theorems, respectively.

\begin{theorem}\label{fpcTr-energy}
For every $T\in(0,1)$,
if $E(T)$ is right-computable
then $T$ is Chaitin $T$-random.\qed
\end{theorem}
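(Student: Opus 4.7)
The plan is to invoke the general fixed-point criterion of Theorem~\ref{gfpCTr} with the choices $f(T) = E(T)$ and $g(T,k) = E_k(T)$, since the four hypotheses of that theorem have essentially already been verified in the paragraphs preceding the statement of Theorem~\ref{fpcTr-energy}. First, the monotonicity of $f$ follows from $E'(T) = C(T) > 0$ for every $T \in (0,1)$, which is established via Theorem~\ref{tdr}(ii) and (iii). Condition (i) is just the defining limit $\lim_{k \to \infty} E_k(T) = E(T)$. Condition (ii) reduces to the observation that, since $\{p_i\}$ is a recursive enumeration of $\Dom U$, the value $E_k(r) = W_k(r)/Z_k(r)$ is a computable rational function of $r \in \Q\cap(0,1)$ uniformly in $k$, so $\{(q,r,k) : q < E_k(r)\}$ is r.e.

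For condition (iii), I would apply the mean value theorem to $E_k$ together with the uniform bounds on $C_k$ derived from \eqref{CkT} and Lemma~\ref{uniform}: for $T < t$ in $(0,1)$, one obtains $k_0 \in \N^+$ and constants $a,b \in \N$ with $2^{-a}(x - T) \le g(x,k) - g(T,k) < 2^b(x - T)$ for all $k \ge k_0$ and $x \in (T, t)$. For condition (iv), the closed-form difference formula
\begin{equation*}
  E_{k+1}(T) - E_k(T)
  = \frac{Z_k(T)\abs{p_{k+1}} - W_k(T)}{Z_{k+1}(T) Z_k(T)}\, 2^{-\abs{p_{k+1}}/T},
\end{equation*}
combined with Lemma~\ref{uniform}(i) and $\lim_{k \to \infty} \abs{p_{k+1}} = \infty$, yields $\abs{p_{k+1}}\, 2^{-\abs{p_{k+1}}/T - a} \le g(T,k+1) - g(T,k)$ for sufficiently large $k$, which is exactly the required lower bound with $c = 1$.

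With all four hypotheses in place, Theorem~\ref{gfpCTr} immediately gives that right-computability of $E(T)$ implies Chaitin $T$-randomness of $T$, finishing the proof in a single invocation. The one subtle point is condition (iv): the factor $\abs{p_{k+1}}^c$ with $c \ge 1$ is essential in order to conclude the stronger Chaitin $T$-randomness rather than merely weak Chaitin $T$-randomness. Fortunately this factor emerges naturally here, because $E_k$ is built from $W_k$, whose summands already carry an extra factor $\abs{p_{k+1}}$ compared with those of $Z_k$; this is precisely the structural reason the $E(T)$ fixed point is genuinely Chaitin random, paralleling the way Theorems~\ref{fpt-energy} and \ref{fpt-entropy} strengthen Theorem~\ref{main}.
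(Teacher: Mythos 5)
Your proposal is correct and follows essentially the same route as the paper: it instantiates Theorem~\ref{gfpCTr} with $f(T)=E(T)$ and $g(T,k)=E_k(T)$, gets monotonicity from $E'(T)=C(T)>0$, verifies condition (iii) via the mean value theorem and the uniform bounds on $C_k$ from \eqref{CkT} and Lemma~\ref{uniform}, and verifies condition (iv) from the closed-form difference $E_{k+1}(T)-E_k(T)$, exactly as in Section~\ref{proof-fpt-energy}. The only nitpick is that $E_k(r)$ for rational $r$ is not a \emph{rational} number (the summands $2^{-\abs{p_i}/r}$ are irrational in general), but it is a uniformly computable real, which is all that is needed for condition (ii).
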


\begin{theorem}\label{fpTc1-energy}
For every $T\in(0,1)$,
if $E(T)$ is right-computable
then $T$ is also right-computable.\qed
\end{theorem}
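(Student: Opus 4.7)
The strategy is to apply the general meta-theorem~\ref{gfpTc1} with $f(T)=E(T)$ and $g(T,k)=E_{k}(T)$, exactly as set up throughout Section~\ref{proof-fpt-energy}. Most of the hypotheses have already been verified there in the course of establishing the conditions for Theorems~\ref{gfpCTr} and \ref{gfpTc2}, so what remains is mainly to confirm one additional condition and to invoke Theorem~\ref{gfpTc1}.

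First, I would note that $f$ is increasing on $(0,1)$: by Theorem~\ref{tdr}~(ii) and (iii), $f'(T)=E'(T)=C(T)>0$. Condition (i) of Theorem~\ref{gfpTc1} is the definition of $E(T)$. Condition (iii), the r.e.\ property of the comparison set, holds because $E_{k}(r)$ is uniformly computable in $(r,k)\in(\Q\cap(0,1))\times\N^+$: it is a ratio of two finite sums of terms $2^{-\abs{p_i}/r}$ and $\abs{p_i}\,2^{-\abs{p_i}/r}$, each computable from $r$ and $\abs{p_i}$, whose denominator is a positive computable real.

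The main point is to verify condition (ii) of Theorem~\ref{gfpTc1}: for every $T_1,T_2\in(0,1)$ with $T_1<T_2$, there exists $k_0\in\N^+$ such that $g(x,k)\le f(x)$ for every $k\ge k_0$ and every $x\in[T_1,T_2]$. The tool here is the identity
\begin{equation*}
  E_{k+1}(x)-E_{k}(x)
  =\frac{Z_{k}(x)\abs{p_{k+1}}-W_{k}(x)}{Z_{k+1}(x)\,Z_{k}(x)}\,2^{-\frac{\abs{p_{k+1}}}{x}}
\end{equation*}
from \eqref{dEk}. By Lemma~\ref{uniform}~(ii)--(iii), $Z_{k}(x)$ and $W_{k}(x)$ converge uniformly on $[T_1,T_2]$ to the continuous positive functions $Z(x)$ and $W(x)$; hence $Z_{k}(x)$ admits a uniform positive lower bound on $[T_1,T_2]$ and $W_{k}(x)/Z_{k}(x)=E_{k}(x)$ is uniformly bounded above there. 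Since $\abs{p_{k+1}}\to\infty$, there is a $k_0$ so that $Z_{k}(x)\abs{p_{k+1}}>W_{k}(x)$ for every $k\ge k_0$ and every $x\in[T_1,T_2]$. Consequently $E_{k}(x)$ is increasing in $k$ from $k_0$ onward, uniformly in $x\in[T_1,T_2]$, and therefore $E_{k}(x)<E_{k+1}(x)\le\dots\le E(x)$, giving condition (ii).

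The main obstacle is precisely securing the uniformity of $k_0$ across the whole compact interval $[T_1,T_2]$; this is exactly what Lemma~\ref{uniform} supplies via uniform convergence of $Z_{k}$ and $W_{k}$ on $(0,1)$ in the wider sense and continuity of the limits. With all three hypotheses of Theorem~\ref{gfpTc1} verified and $f$ increasing, Theorem~\ref{gfpTc1} immediately yields that if $E(T)$ is right-computable then $T$ is right-computable, which is Theorem~\ref{fpTc1-energy}.
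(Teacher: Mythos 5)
Your proposal is correct and follows essentially the same route as the paper: it applies Theorem~\ref{gfpTc1} with $f(T)=E(T)$, $g(T,k)=E_k(T)$, and verifies the only nontrivial hypothesis, condition~(ii), by using the increment formula \eqref{dEk} together with Lemma~\ref{uniform} and $\lim_{k\to\infty}\abs{p_{k+1}}=\infty$ to get a $k_0$ uniform over the compact interval $[T_1,T_2]$ beyond which $E_k(x)$ increases in $k$, hence $E_k(x)\le E(x)$. The paper records a slightly more quantitative uniform lower bound $\abs{p_{k+1}}2^{-\abs{p_{k+1}}/x-a}\le g(x,k+1)-g(x,k)$, but the underlying argument is the same as yours.
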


\begin{theorem}\label{fpTc2-energy}
For every $T\in(0,1)$,
if $E(T)$ is left-computable and $T$ is right-computable,
then $T$ is $T$-compressible.\qed
\end{theorem}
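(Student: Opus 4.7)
The plan is to obtain Theorem~\ref{fpTc2-energy} as an immediate specialization of Theorem~\ref{gfpTc2} to the pair $f(T)=E(T)$, $g(T,k)=E_k(T)$. All six hypotheses of Theorem~\ref{gfpTc2} have been verified for this $f$ and $g$ in the body of Section~\ref{proof-fpt-energy} just above the statement, so the proof is essentially a bookkeeping step: collect the verifications and invoke Theorem~\ref{gfpTc2}.

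More concretely, condition (i) is Definition~\ref{tdqait}(iii) together with Lemma~\ref{uniform}(i); condition (v) holds because, for each fixed $k$, $E_k(T)=W_k(T)/Z_k(T)$ is a rational function in $2^{-|p_i|/T}$ with positive denominator, hence continuous in $T$ on $(0,1)$; and condition (vi) holds because $g(r,k)=E_k(r)$ is uniformly computable from $(r,k)\in(\Q\cap(0,1))\times\N^+$, so the set $\{(q,r,k)\mid q>g(r,k)\}$ is r.e. Conditions (ii) and (iv) are exactly the two-sided bound \eqref{dgEk} established earlier from \eqref{dEk} and Lemma~\ref{uniform}(i): the lower bound gives $g(T,k)<g(T,k+1)$ for all large $k$, hence $g(T,k)<f(T)$ (condition (ii)), while the upper bound with $b=1$, $c=b$ furnishes condition (iv). Condition (iii) is the content of the mean-value-theorem identity \eqref{dgET} combined with the uniform lower bound on $C_k$ given in \eqref{mCM}, which produces the required constant $2^{-a}$ on any interval $(T,t)\subset(0,1)$ with $t<1$.

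With all six conditions of Theorem~\ref{gfpTc2} in hand, a direct application yields: for every $T\in(0,1)$, if $E(T)=f(T)$ is left-computable and $T$ is right-computable, then $T$ is $T$-compressible. This is exactly Theorem~\ref{fpTc2-energy}.

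The main obstacle, as with the companion Theorems~\ref{fpcTr-energy} and \ref{fpTc1-energy}, has already been dispatched in the preceding analysis; the substantive ingredient is the lower bound on $C_k$ on compact subintervals of $(0,1)$ coming from the explicit computation of $C_{k+1}(T)-C_k(T)$ via \eqref{CkT}, together with the continuity of $C(T)$ supplied by Lemma~\ref{uniform} and \eqref{CkT}. Once those facts are available, the reduction to Theorem~\ref{gfpTc2} is mechanical and carries no further technical difficulty.
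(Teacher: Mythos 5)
Your proposal is correct and follows exactly the paper's route: Theorem~\ref{fpTc2-energy} is obtained by applying Theorem~\ref{gfpTc2} to $f(T)=E(T)$ and $g(T,k)=E_k(T)$, with conditions (i), (v), (vi) immediate, conditions (ii) and (iv) coming from the two-sided bound \eqref{dgEk}, and condition (iii) from the mean value theorem \eqref{dgET} together with the bounds \eqref{mCM} on $C_k$. The matching of constants for condition (iv) (exponent $1$ on $\abs{p_{k+1}}$ and the additive constant from \eqref{dgEk}) is also as in the paper.
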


Theorem~\ref{fpt-energy} follows immediately from
these three theorems.

\subsection{The proof of Theorem~\ref{fpt-entropy}}
\label{proof-fpt-entropy}

In a similar manner to the proof of Theorem~\ref{fpt-energy}
described in the previous subsection,
we can prove Theorem~\ref{fpt-entropy},
based on Theorems~\ref{tdr}, \ref{gfpCTr}, \ref{gfpTc1}, and \ref{gfpTc2}.
It is easy to convert the proof of Theorem~\ref{fpt-energy}
into the the proof of Theorem~\ref{fpt-entropy},
because of the similarity between
$E_{k}'(T)=C_{k}(T)$ and $S_{k}'(T)=C_{k}(T)/T$
given in Theorem~\ref{tdr} (i).

\section{Some properties of the sufficient conditions}
\label{psc}

In this section,
we investigate some properties of the sufficient conditions
for $T$ to be a fixed point
in the fixed point theorems on partial randomness;
Theorems~\ref{main}, \ref{fpt-free_energy}, \ref{fpt-energy},
and \ref{fpt-entropy}.
%
%
First we show
Theorems~\ref{scZ}, \ref{scF}, \ref{scE}, and \ref{scS} below
on the sufficient conditions.

\begin{theorem}[\cite{T08CiE}]\label{scZ}
The set
$\{\,T\in(0,1)\mid Z(T)\text{ is computable}\,\}$
is dense in $(0,1)$.
\end{theorem}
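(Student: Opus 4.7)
The plan is to invoke the intermediate value theorem. First I would note, from Lemma~\ref{uniform}~(iii), that the function $Z$ is continuous on $(0,1)$. Next I would verify that $Z$ is strictly increasing: for any $T_1, T_2 \in (0,1)$ with $T_1 < T_2$, each term $2^{-\abs{p_i}/T}$ in the defining series of $Z(T)$ is strictly increasing in $T > 0$, so passing to the limit in $Z_k(T_1) < Z_k(T_2)$ yields $Z(T_1) \le Z(T_2)$, with strict inequality because the positive gap from any single term (say $i=1$) survives.

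Given continuity and strict monotonicity, density follows routinely. For any nonempty open subinterval $(a,b) \subset (0,1)$, the image $Z([a,b]) = [Z(a), Z(b)]$ is a nondegenerate closed interval. Since $\Q$ (in particular, the set of computable reals) is dense in $\R$, I would pick any rational $q \in (Z(a), Z(b))$, and then the intermediate value theorem produces a (unique) $T \in (a,b)$ with $Z(T) = q$. The number $q$ being rational is trivially computable, so $Z(T)$ is computable, and $T$ lies in the prescribed open subinterval.

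Since $(a,b)$ was arbitrary, the set $\{T \in (0,1) \mid Z(T)\text{ is computable}\}$ meets every nonempty open subinterval of $(0,1)$ and is therefore dense. The only nonroutine ingredient is the strict monotonicity of $Z$, which is immediate from the term-by-term comparison in the defining series, so I do not anticipate any substantive obstacle.
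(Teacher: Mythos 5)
Your proof is correct and follows essentially the same route as the paper: continuity of $Z$ from Lemma~\ref{uniform}~(iii), monotonicity from the term-by-term comparison, and density of the computable reals via the intermediate value theorem. The paper's own proof is just a terser statement of exactly this argument.
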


\begin{proof}
Since the function $2^{-l/T}$ of $T$ is
increasing on $(0,1)$ for each $l\in\N^+$,
the function $Z(T)$ of $T$ is increasing on $(0,1)$.
On the other hand,
the function $Z(T)$ of $T$ is continuous on $(0,1)$
by Lemma~\ref{uniform} (iii).
Thus,
since the set of all computable real numbers is dense in $\R$,
the result follows.
\end{proof}

\begin{theorem}\label{scF}
The set
$\{\,T\in(0,1)\mid F(T)\text{ is computable}\,\}$
is dense in $(0,1)$.
\end{theorem}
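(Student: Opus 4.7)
The plan is to follow the proof of Theorem~\ref{scZ} essentially verbatim, replacing $Z$ by $F$. The two ingredients needed are strict monotonicity of $F$ on $(0,1)$ and continuity of $F$ on $(0,1)$; once these are in hand, the density of computable reals in $\R$ closes the argument.

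For monotonicity, I would not try to argue directly from the defining series (as one does for $Z$, whose summands $2^{-l/T}$ are each visibly increasing in $T$), because the factor $-T$ in $F(T)=-T\log_2 Z(T)$ pulls in the opposite direction. Instead, I would invoke Theorem~\ref{tdr}~(ii), which gives $F'(T)=-S(T)$ on $(0,1)$, together with Theorem~\ref{tdr}~(iii), which asserts $S(T)>0$ for every $T\in(0,1)$. Hence $F'(T)<0$ throughout $(0,1)$, so $F$ is strictly decreasing on $(0,1)$.

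For continuity, I would observe that Lemma~\ref{uniform}~(i) and~(iii) ensure $Z(T)$ is continuous and strictly positive on $(0,1)$, so $\log_2 Z(T)$ is continuous on $(0,1)$, and therefore $F(T)=-T\log_2 Z(T)$, being a product of continuous functions, is continuous on $(0,1)$.

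With both properties secured, the conclusion is routine: given any $T_0\in(0,1)$ and any sufficiently small $\varepsilon>0$ with $(T_0-\varepsilon,T_0+\varepsilon)\subset(0,1)$, continuity together with strict monotonicity imply that $F((T_0-\varepsilon,T_0+\varepsilon))$ is a nondegenerate open interval in $\R$, which by density contains some computable real $c$; strict monotonicity then yields a unique $T\in(T_0-\varepsilon,T_0+\varepsilon)$ with $F(T)=c$, and for this $T$ the value $F(T)$ is computable. There is no substantive obstacle here; the only point that requires care is that monotonicity must be obtained through Theorem~\ref{tdr} rather than by inspection of the defining sum, as was possible for $Z$.
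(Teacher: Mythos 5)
Your proposal is correct and follows essentially the same route as the paper: the paper's proof also derives that $F$ is decreasing and continuous on $(0,1)$ from Theorem~\ref{tdr}~(ii) and~(iii) (i.e., $F'(T)=-S(T)<0$) and then concludes by density of the computable reals. Your alternative derivation of continuity via $F(T)=-T\log_2 Z(T)$ and Lemma~\ref{uniform} is a harmless variation of the same argument.
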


\begin{proof}
It follows from Theorem~\ref{tdr} (ii) and (iii) that
the function $F(T)$ of $T$ is
a decreasing continuous function on $(0,1)$.
Since
the set of all computable real numbers is dense in $\R$,
the result follows.
\end{proof}

In the same manner as the proof of Theorem~\ref{scF},
we can prove the following theorems for each of
$E(T)$ and $S(T)$.

\begin{theorem}\label{scE}
The set
$\{\,T\in(0,1)\mid E(T)\text{ is computable}\,\}$
is dense in $(0,1)$.\qed
\end{theorem}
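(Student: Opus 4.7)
The plan is to follow the proof strategy of Theorem~\ref{scF} essentially verbatim, with $F(T)$ replaced by $E(T)$. What the earlier argument really uses is that $F(T)$ is continuous and strictly monotone on $(0,1)$; then the image of any nonempty open subinterval is a nonempty open interval, which must contain a computable real $q$, and the unique preimage $T$ under $F$ lies in the subinterval and satisfies $F(T) = q$ computable. So I just need the same two ingredients for $E(T)$.

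Strict monotonicity of $E(T)$ is the easier ingredient and follows directly from Theorem~\ref{tdr}: part (ii) gives $E'(T) = C(T)$ on $(0,1)$, and part (iii) gives $C(T) > 0$ for every $T \in (0,1)$, so $E$ is strictly increasing on $(0,1)$. For continuity, I would first observe that combining the identity \eqref{EkT}, $E_k(T) = W_k(T)/Z_k(T)$, with Lemma~\ref{uniform}~(i) (which gives $Z(T), W(T) \in (0,\infty)$ as pointwise limits of $Z_k, W_k$) yields the closed form $E(T) = W(T)/Z(T)$ for every $T \in (0,1)$. Since Lemma~\ref{uniform}~(iii) states that $W(T)$ and $Z(T)$ are continuous on $(0,1)$ and $Z(T) > 0$, the quotient $E(T)$ is continuous on $(0,1)$ as well.

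With strict monotonicity and continuity in hand, the conclusion is immediate. For any nonempty open interval $(a,b) \subset (0,1)$, the image $E((a,b)) = (E(a), E(b))$ is a nonempty open interval; density of the computable reals in $\R$ produces a computable $q \in (E(a), E(b))$; and the unique $T \in (a,b)$ with $E(T) = q$ witnesses a point of $(a,b)$ at which $E(T)$ is computable. There is no genuine obstacle here — all of the analytic content (positivity of $Z, W$, uniform convergence in the wider sense, continuity) is already packaged in Lemma~\ref{uniform} and Theorem~\ref{tdr}, and the remaining argument is the same intermediate-value-plus-density observation used for $F(T)$ in Theorem~\ref{scF}.
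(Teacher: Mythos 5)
Your proposal is correct and follows essentially the same route as the paper: the paper proves Theorem~\ref{scF} by noting that $F$ is continuous and strictly monotone (via Theorem~\ref{tdr}~(ii) and (iii)) and then invoking density of the computable reals, and it proves Theorem~\ref{scE} by the same argument applied to $E$. Your only (harmless) variation is deriving continuity of $E$ from the quotient $W(T)/Z(T)$ and Lemma~\ref{uniform}~(iii) rather than from differentiability via Theorem~\ref{tdr}~(ii).
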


\begin{theorem}\label{scS}
The set
$\{\,T\in(0,1)\mid S(T)\text{ is computable}\,\}$
is dense in $(0,1)$.\qed
\end{theorem}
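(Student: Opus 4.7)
The plan is to imitate, almost verbatim, the proof of Theorem~\ref{scF}, substituting the role of $F$ by $S$. The two ingredients needed are: (a) $S(T)$ is a continuous function of $T$ on $(0,1)$, and (b) $S(T)$ is strictly monotonic on $(0,1)$. Once these are in hand, density of the computable reals in $\R$ plus the intermediate value theorem give the result immediately.

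For (b), I would invoke Theorem~\ref{tdr} (ii), which states $S'(T)=C(T)/T$ on $(0,1)$, together with Theorem~\ref{tdr} (iii), which gives $C(T)>0$ on $(0,1)$. Hence $S'(T)>0$ on the whole of $(0,1)$, so $S$ is (strictly) increasing there. Since $S$ is differentiable on $(0,1)$ by Theorem~\ref{tdr} (ii), (a) is automatic: differentiability implies continuity. (Alternatively, one can read continuity off the definition of $S(T)$ together with the uniform convergence statements in Lemma~\ref{uniform}, but the differentiability route is the shortest.)

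To conclude, let $(a,b)\subset(0,1)$ be an arbitrary nonempty open subinterval. Because $S$ is continuous and strictly increasing on $(0,1)$, the image $S((a,b))=(S(a^+),S(b^-))$ is a nonempty open subinterval of $\R$. Since the computable reals are dense in $\R$, there exists a computable $r$ with $S(a^+)<r<S(b^-)$. By the intermediate value theorem there is a unique $T\in(a,b)$ with $S(T)=r$, and for this $T$ the value $S(T)$ is computable. Hence the set $\{\,T\in(0,1)\mid S(T)\text{ is computable}\,\}$ meets every nonempty open subinterval of $(0,1)$, i.e.\ it is dense in $(0,1)$.

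There is essentially no obstacle: all the nontrivial analytic work (existence of $S'$, positivity of $C$, continuity) has already been established in Theorem~\ref{tdr} and Lemma~\ref{uniform}. The only care required is to cite the right clauses of Theorem~\ref{tdr}: (ii) for $S'(T)=C(T)/T$ and (iii) for $C(T)>0$, since (unlike in the proof of Theorem~\ref{scF}, where one uses $F'(T)=-S(T)$ with $S(T)>0$) here monotonicity of $S$ comes from strict positivity of $C$ rather than of $S$ itself.
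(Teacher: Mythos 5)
Your proof is correct and follows essentially the same route as the paper, which simply says the result is proved ``in the same manner as Theorem~\ref{scF}'': continuity and strict monotonicity of $S$ on $(0,1)$ via Theorem~\ref{tdr} (ii) and (iii), then density of the computable reals. Your added remark that monotonicity here comes from $C(T)>0$ rather than $S(T)>0$ is exactly the right adaptation.
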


Since the relation $F(T)=-T\log_2 Z(T)$ holds
from Definition~\ref{tdqait},
we can show the following theorem
on the relation between the sufficient conditions
in the fixed point theorems
by $Z(T)$ and $F(T)$.

\begin{theorem}
There does not exist $T\in(0,1)$ such that
both $Z(T)$ and $F(T)$ are computable.
\end{theorem}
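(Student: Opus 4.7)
The plan is to argue by contradiction. Suppose some $T\in(0,1)$ has both $Z(T)$ and $F(T)$ computable. The statistical-mechanical relation $F(T)=-T\log_2 Z(T)$ built into Definition~\ref{tdqait} will then force $T$ itself to be computable, and this will contradict Theorem~\ref{main}.

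First I would handle the elementary analytic prerequisites needed to invert the defining relation. By Lemma~\ref{uniform}~(i) the value $Z(T)$ is a positive real number, so $\log_2 Z(T)$ is well-defined. Moreover, each summand $2^{-\abs{p_i}/T}$ is strictly increasing in $T$ on $(0,1)$, so $Z(T)$ is strictly increasing on $(0,1)$, and therefore $Z(T)<Z(1)=\Omega<1$. Thus $\log_2 Z(T)$ is a strictly negative real number, and in particular it is nonzero. Since $Z(T)$ is computable, so is $\log_2 Z(T)$ (the base-two logarithm is computable on the positive computables and preserves computability). Solving $F(T)=-T\log_2 Z(T)$ for $T$ gives
\begin{equation*}
  T=-\frac{F(T)}{\log_2 Z(T)},
\end{equation*}
so $T$ is a ratio of two computable reals with nonzero denominator, hence $T$ is computable.

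Next I would derive the contradiction from the computability of $T$. By Theorem~\ref{main}, since $Z(T)$ is computable, $T$ is weakly Chaitin $T$-random, meaning there is $c\in\N$ with $Tn-c\le H(T_n)$ for all $n\in\N^+$. On the other hand, if $T$ is a computable real, then there is a total recursive procedure that on input $n$ outputs the finite binary string $T_n$, so $H(T_n)\le H(n)+O(1)$, and by \eqref{eq: fas} this is $O(\log n)$. Combining the two estimates yields $Tn\le O(\log n)$ for all sufficiently large $n$, which is impossible for $T>0$. This contradiction completes the argument.

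The proof is essentially routine once the relation $F(T)=-T\log_2 Z(T)$ is exploited; the only step that requires a moment's care is ruling out $Z(T)=1$ (equivalently $\log_2 Z(T)=0$), since otherwise the inversion $T=-F(T)/\log_2 Z(T)$ would fail. I would expect no genuine obstacle, because the monotonicity of $Z$ together with $Z(1)=\Omega<1$ handles this automatically.
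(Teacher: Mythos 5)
Your proof is correct and follows essentially the same route as the paper: both arguments invert the relation $F(T)=-T\log_2 Z(T)$, using $0<Z(T)<1$ to ensure $\log_2 Z(T)\neq 0$, and conclude that $T$ itself must be computable. The only difference is the endgame: the paper then applies Theorem~\ref{cprpffe}~(i) to conclude that $Z(T)$ is weakly Chaitin $T$-random and hence not computable, whereas you apply Theorem~\ref{main} to conclude that $T$ is weakly Chaitin $T$-random and contradict the bound $H(T_n)=O(\log n)$ valid for computable reals; both finishes are sound.
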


\begin{proof}
Contrarily, assume that
both $Z(T)$ and $F(T)$ are computable
for some $T\in(0,1)$.
Since $F(T)=-T\log_2 Z(T)$ and $0<Z(T)<1$,
it is easy to see that $T$ is computable.
It follows from Theorem~\ref{cprpffe} (i) that
$Z(T)$ is weakly Chaitin $T$-random.
However, this contradicts the assumption that $Z(T)$ is computable,
and the result follows.
\end{proof}

Thus, the computability of $F(T)$ gives
completely different fixed points from the computability of $Z(T)$.
This implies that neither the computability of $Z(T)$ nor
the computability of $F(T)$ is the necessary condition
for $T$ to be a fixed point on partial randomness at all. 

In a similar manner,
we can
prove
the following two theorems
using the relations
$S(T)=E(T)/T+\log_2 Z(T)$ and $S(T)=(E(T)-F(T))/T$, respectively.

\begin{theorem}
There does not exist $T\in(0,1)$ such that
$Z(T)$, $E(T)$, and $S(T)$ are all computable.
\qed
\end{theorem}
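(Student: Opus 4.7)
The plan is to derive a contradiction along the same pattern as the previous theorem: assume $Z(T)$, $E(T)$, and $S(T)$ are all computable for some $T \in (0,1)$, deduce that $T$ itself must then be computable, and invoke Theorem~\ref{cprpffe}~(i) to conclude that $Z(T)$ is weakly Chaitin $T$-random, contradicting its computability. The hint provided by the paper points exactly to the right algebraic identity.

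The key step is to use the relation $S(T) = E(T)/T + \log_2 Z(T)$, which follows from Definition~\ref{tdqait} by passing to the limit in equations \eqref{FkT}--\eqref{SkT} (indeed $S(T) = (E(T)-F(T))/T$ together with $F(T) = -T\log_2 Z(T)$ yields it immediately). Solving this identity for $T$ gives
\begin{equation*}
  T = \frac{E(T)}{S(T) - \log_2 Z(T)},
\end{equation*}
provided the denominator is nonzero. If this is the case, then computability of $Z(T)$, $E(T)$, and $S(T)$ forces computability of $T$ via standard closure properties of computable reals (the logarithm of a positive computable real is computable, as are sums and quotients by nonzero computable reals).

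The one issue that must be verified is that $S(T) - \log_2 Z(T) \neq 0$. I would argue as follows. By Theorem~\ref{tdr}~(iii), $S(T) > 0$ for every $T \in (0,1)$. On the other hand, since $\Dom U$ is an infinite prefix-free subset of $\X$, Kraft's inequality gives $Z(1) = \Omega < 1$; as $Z(T)$ is increasing in $T$ on $(0,1)$ (each summand $2^{-|p_i|/T}$ is increasing in $T$), we have $0 < Z(T) < 1$, and therefore $\log_2 Z(T) < 0$. Combining, $S(T) - \log_2 Z(T) > 0$, so the denominator is safely away from zero. Having established that $T$ is computable, Theorem~\ref{cprpffe}~(i) yields that $Z(T)$ is weakly Chaitin $T$-random, which is incompatible with $Z(T)$ being computable (a computable real $\alpha$ satisfies $H(\alpha_n) = O(\log n)$, so no bound of the form $Tn - c \le H(\alpha_n)$ can hold for $T > 0$).

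I do not anticipate a genuine obstacle: the only non-routine point is the verification of $Z(T) < 1$, and this is already available from Kraft's inequality together with the monotonicity of $Z(T)$ used in the proof of Theorem~\ref{scZ}. The analogous argument for the companion theorem mentioned by the authors (which uses $S(T) = (E(T) - F(T))/T$ instead) will be essentially identical, just with $F(T)$ in place of $-T\log_2 Z(T)$.
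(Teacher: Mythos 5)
Your proposal is correct and follows exactly the route the paper intends: the relation $S(T)=E(T)/T+\log_2 Z(T)$ (which the paper explicitly names as the key identity), solving for $T$ with the denominator kept away from zero via $S(T)>0$ and $0<Z(T)<1$, and then deriving a contradiction from Theorem~\ref{cprpffe}~(i) just as in the preceding theorem's proof. No gaps.
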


\begin{theorem}
There does not exist $T\in(0,1)$ such that
$F(T)$, $E(T)$, and $S(T)$ are all computable.
\qed
\end{theorem}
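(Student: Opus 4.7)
The plan is to argue by contradiction, paralleling the preceding proof (for the $Z,E,S$ case) but exploiting the thermodynamic identity $S(T)=(E(T)-F(T))/T$ highlighted in the paragraph immediately before the statement. Suppose, toward a contradiction, that there exists some $T\in(0,1)$ for which $F(T)$, $E(T)$, and $S(T)$ are all computable. By Theorem~\ref{tdr} (iii) we have $S(T)>0$, so the identity may be rearranged to read $T=(E(T)-F(T))/S(T)$. Since the class of computable real numbers is closed under subtraction and under division by a nonzero computable real, this expression shows that $T$ itself is a computable real in $(0,1)$.

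Once computability of $T$ is in hand, the conclusion is immediate from the results already reviewed in Section~\ref{tcr}. By Theorem~\ref{cpreesh} (i), for any computable $T\in(0,1)$ the value $E(T)$ is Chaitin $T$-random, and hence in particular weakly Chaitin $T$-random. But any computable real $\alpha$ satisfies $H(\alpha_n)=O(\log n)$, because $\alpha_n$ can be produced from $n$ together with a fixed program computing $\alpha$ to accuracy $2^{-n-1}$, so using \eqref{eq: fas} we get $H(\alpha_n)\le 2\log_2 n+O(1)$. In particular $\alpha$ cannot be weakly Chaitin $T$-random for any $T>0$. This contradicts the assumed computability of $E(T)$, completing the argument.

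The only non-routine decision is which of the three thermodynamic relations linking $Z,F,E,S$ to use, and the case-by-case strategy here is dictated by which of $Z,F,E,S$ are assumed computable. The one mild subtlety, namely that the denominator $S(T)$ must not vanish for the algebraic rearrangement to be legitimate, is handled directly by Theorem~\ref{tdr} (iii). There is no serious obstacle; either of $E(T)$ or $F(T)$ could equally well have been chosen as the quantity to which Theorem~\ref{cpreesh} (i) (respectively Theorem~\ref{cprpffe} (i)) is applied at the final step, since each is weakly Chaitin $T$-random whenever $T$ is computable.
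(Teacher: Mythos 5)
Your proof is correct and follows essentially the same route the paper intends: it rearranges the identity $S(T)=(E(T)-F(T))/T$ to express $T$ in terms of the three assumed-computable quantities (legitimately, since $S(T)>0$ by Theorem~\ref{tdr} (iii)), concludes $T$ is computable, and then contradicts the computability of $E(T)$ via Theorem~\ref{cpreesh} (i), exactly mirroring the paper's proof of the preceding $Z$--$F$ incompatibility theorem. The only cosmetic slip is the citation of \eqref{eq: fas} for the bound $H(\alpha_n)=O(\log n)$ for computable $\alpha$; that bound really comes from \eqref{eq: k} applied to a computer that decodes a self-delimiting encoding of $n$ and outputs $\alpha_n$, but this standard fact is also what the paper implicitly relies on, so the argument stands.
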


%

Using the property of a fixed point in the fixed point theorems,
we can show the following theorem.

\begin{theorem}\label{pfp}
$S_a\cap S_b=\emptyset$
for any distinct computable real numbers $a,b\in(0,1]$,
where $S_a=\{\,T\in(0,1)\mid Z(aT)\text{ is computable}\,\}$.
\end{theorem}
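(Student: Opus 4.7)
The plan is a proof by contradiction. Suppose $T\in S_a\cap S_b$ for some distinct computable $a,b\in(0,1]$, and assume without loss of generality that $a<b$. Since $T\in(0,1)$ and $0<a<b\le 1$, both $aT$ and $bT$ lie in $(0,1)$ with $aT<bT$. By hypothesis $Z(aT)$ and $Z(bT)$ are computable, so Theorem~\ref{main} applies at both points: $aT$ is $(aT)$-compressible, giving $H((aT)_n)\le (aT)n+o(n)$, and $bT$ is weakly Chaitin $(bT)$-random, giving $H((bT)_n)\ge (bT)n-O(1)$. Observe that I only use compressibility on the smaller side and randomness on the larger side; the WLOG choice breaks the symmetry.

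The key step is a transfer lemma: since $bT=c\cdot aT$ with $c=b/a>1$ a positive computable real, one has $H((bT)_n)\le H((aT)_n)+O(1)$. To prove this, fix an integer $K$ with $c\le 2^{K-1}$ (available since $c$ is a computable real). I would build a self-delimiting computer $C$ which, on input a $U$-program $p$ followed by a constant-length block of correction bits $e$, runs $U(p)$ to obtain a string $s$ of length $m$, forms the dyadic approximation $\tilde y=c\cdot 0.s$ (which approximates $c\cdot aT$ to within $2^{-(m-K+1)}$ whenever $s=(aT)_m$), reads off the first $m-K$ bits of $\tilde y$, and uses $e$ to resolve the bounded rounding ambiguity in the last bit. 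Specializing to $m=n+K$ yields $H((bT)_n)\le H((aT)_{n+K})+O(1)$. Separately, reading exactly $K$ extra bits after a self-delimiting $U$-program defines another self-delimiting computer showing $H((aT)_{n+K})\le H((aT)_n)+K+O(1)=H((aT)_n)+O(1)$ because $K$ is fixed. Chaining these inequalities gives the lemma.

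Combining the transfer lemma with the $(aT)$-compressibility of $aT$ yields $H((bT)_n)\le H((aT)_n)+O(1)\le (aT)n+o(n)$. The weak Chaitin $(bT)$-randomness of $bT$ then gives $(bT)n\le (aT)n+o(n)+O(1)$; dividing by $n$ and letting $n\to\infty$ forces $bT\le aT$, contradicting $aT<bT$. Hence $S_a\cap S_b=\emptyset$. The main obstacle is the transfer lemma itself: although it is a routine AIT construction, one must ensure that $p\cdot e$ defines a genuinely prefix-free input scheme (using the fact that $p\in\Dom U$ is already self-delimiting and only a fixed constant number of bits is appended), and one must carefully verify that the precision $2^{-(m-K+1)}$ of $\tilde y$ leaves only an $O(1)$ ambiguity when extracting the first $m-K$ bits of $c\cdot aT$, so that the constant block $e$ indeed suffices to identify $(bT)_n$.
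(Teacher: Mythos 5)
Your argument is correct and rests on the same key fact as the paper's proof, namely that prefix complexity of the binary expansion is invariant up to $O(1)$ under multiplication by a fixed nonzero computable real; the paper simply asserts $H((aT)_n)=H(T_n)+O(1)$ and compares the two resulting compression rates of $T$, whereas you prove the transfer $H((bT)_n)\le H((aT)_n)+O(1)$ explicitly and split Theorem~\ref{main} into its compressibility and randomness halves. The extra detail in your transfer lemma (bounded dyadic ambiguity resolved by constantly many correction bits) is sound and mirrors the technique already used in the paper's proof of Theorem~\ref{gfpTc2}.
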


\begin{proof}
Let $T\in(0,1)$, and let $a$ be a computable real number with $a\in(0,1]$.
Suppose that $Z(aT)$ is computable.
Then, by Theorem~\ref{main},
$\lim_{n\to\infty}H((aT)_n)/n=aT$.
Note that $H((aT)_n)=H(T_n)+O(1)$ for all $n\in\N^+$.
It follows that $\lim_{n\to\infty}H(T_n)/n=aT$.

Thus, for every computable real numbers $a,b\in(0,1]$,
if $S_a\cap S_b\neq\emptyset$ then $a=b$.
\end{proof}

As a corollary of Theorem~\ref{pfp},
we have the following, for example.

\begin{corollary}
For every $T\in(0,1)$,
if $Z(T)$ is computable,
then $Z(T/n)$ is not computable
for every $n\in\N^+$ with $n\ge 2$.
In other words,
for every $T\in(0,1)$,
if the
sum $\sum_{i=1}^{\infty} 2^{-\abs{p_i}/T}$ is computable,
then the corresponding power sum
$\sum_{i=1}^{\infty}\left(2^{-\abs{p_i}/T}\right)^n$ is not computable
for every $n\in\N^+$ with $n\ge 2$.
\qed
\end{corollary}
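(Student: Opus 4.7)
The plan is to apply Theorem~\ref{pfp} directly with the two specific choices $a=1$ and $b=1/n$, and then unpack what this says in terms of $Z$.

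First I would fix $n\in\N^+$ with $n\ge 2$ and note that $1$ and $1/n$ are distinct computable real numbers in $(0,1]$, so Theorem~\ref{pfp} applies and yields $S_1\cap S_{1/n}=\emptyset$. By the definitions given in Theorem~\ref{pfp}, we have $S_1=\{\,T\in(0,1)\mid Z(T)\text{ is computable}\,\}$ and $S_{1/n}=\{\,T\in(0,1)\mid Z(T/n)\text{ is computable}\,\}$; note that $T\in(0,1)$ implies $T/n\in(0,1)$, so $Z(T/n)$ is indeed defined via Definition~\ref{tdqait}. Now suppose, for contradiction, that $T\in(0,1)$ satisfies both hypotheses: $Z(T)$ is computable and $Z(T/n)$ is computable. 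Then $T$ belongs to $S_1\cap S_{1/n}$, contradicting emptiness. This gives the first assertion.

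For the second formulation, I would observe the purely algebraic identity
\begin{equation*}
Z(T/n)
=\sum_{i=1}^{\infty} 2^{-\abs{p_i}/(T/n)}
=\sum_{i=1}^{\infty} 2^{-n\abs{p_i}/T}
=\sum_{i=1}^{\infty}\left(2^{-\abs{p_i}/T}\right)^{n},
\end{equation*}
which shows that the computability of the $n$th power sum is literally the same as the computability of $Z(T/n)$. Hence the rewrite into power-sum language is an immediate paraphrase of the first assertion.

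There is no real obstacle here: Theorem~\ref{pfp} does all the work, and the only content to verify is the trivial exponent manipulation $2^{-\abs{p_i}/(T/n)}=(2^{-\abs{p_i}/T})^{n}$ that translates the ``reciprocal temperature scaling'' of Theorem~\ref{pfp} into a power-sum statement. The proof is therefore essentially a one-line deduction from Theorem~\ref{pfp}, together with one line of rewriting.
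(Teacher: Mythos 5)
Your proof is correct and matches the paper's intended derivation: the corollary is stated as an immediate consequence of Theorem~\ref{pfp}, obtained exactly as you do by taking $a=1$ and $b=1/n$, together with the identity $2^{-\abs{p_i}/(T/n)}=\left(2^{-\abs{p_i}/T}\right)^{n}$ for the power-sum reformulation. Nothing is missing.
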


\section*{Acknowledgments}

This work was supported by
CREST of the Japan Science and Technology Agency,
by KAKENHI, Grant-in-Aid for Scientific Research (C) (20540134),
and by SCOPE
of the Ministry of Internal Affairs and Communications of Japan.

%


\appendix

\section{The proof of Theorem~\ref{main}}
\label{proof-main}

For completeness, we prove here Theorem~\ref{main},
based on Theorems~\ref{gfpwCTr}, \ref{gfpTc1}, and \ref{gfpTc2},
in a similar manner to
the proof of Theorem~\ref{fpt-free_energy}
given in Section~\ref{proof}.

Let $f\colon (0,1)\to\R$ with $f(T)=Z(T)$,
and let $g\colon (0,1)\times\N^+\to\R$ with $g(T,k)=Z_{k}(T)$.
First, it follows that $f$ is increasing.

Obviously, $\lim_{k\to\infty} g(T,k)=f(T)$ for every $T\in(0,1)$.
It follows that,
for every $T\in(0,1)$ and every $k\in\N^+$,
\begin{equation*}
  g(T,k+1)-g(T,k)=2^{-\frac{\abs{p_{k+1}}}{T}}.
\end{equation*}
Thus,
for every $T\in(0,1)$ and every $k\in\N^+$,
$g(T,k)<g(T,k+1)$ and therefore $g(T,k)<f(T)$.
At this point,
the conditions (i), (ii), and (iv) of Theorem~\ref{gfpwCTr},
all conditions of Theorem~\ref{gfpTc1},
and the conditions (i), (ii), (iv), (v), and (vi) of Theorem~\ref{gfpTc2}
hold for $g$.

Using the mean value theorem we see that,
for every $k\in\N^+$ and every $T,x,t\in(0,1)$ with $T<x<t$,
\begin{equation*}
  \frac{\ln 2}{t^2} W_{k}(T)(x-T)
  < g(x,k)-g(T,k)
  < \frac{\ln 2}{T^2}W_{k}(t)(x-T),
\end{equation*}
where
$W_{k}(T)=\sum_{i=1}^k \abs{p_i}2^{-\frac{\abs{p_i}}{T}}$,
as defined before Lemma~\ref{uniform} in Section~\ref{fpts}.
Note that,
for every $t\in(0,1)$,
$W_{k}(t)$ is increasing on $k$, and
$W(t)=\lim_{k\to\infty}W_{k}(t)$ exists by Lemma~\ref{uniform} (i).
Thus we see that
\begin{equation*}
  \frac{\ln 2}{t^2} W_{1}(T)(x-T)
  < g(x,k)-g(T,k)
  < \frac{\ln 2}{T^2}W(t)(x-T),
\end{equation*}
for every $k\in\N^+$ and every $T,x,t\in(0,1)$ with $T<x<t$.
Therefore,
the condition (iii) of Theorem~\ref{gfpwCTr}
and the condition (iii) of Theorem~\ref{gfpTc2}
hold for $g$.

Thus,
Theorem~\ref{gfpwCTr}, Theorem~\ref{gfpTc1}, and Theorem~\ref{gfpTc2}
result in the following three theorems, respectively.

\begin{theorem}\label{fpwcTr-partition_function}
For every $T\in(0,1)$,
if $Z(T)$ is right-computable
then $T$ is weakly Chaitin $T$-random.\qed
\end{theorem}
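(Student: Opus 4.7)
The plan is to derive Theorem~\ref{fpwcTr-partition_function} as an immediate application of the general Theorem~\ref{gfpwCTr}, specialized to $f\colon(0,1)\to\R$ given by $f(T):=Z(T)$ and the natural approximating family $g\colon(0,1)\times\N^+\to\R$ given by $g(T,k):=Z_k(T)$. First I would note that $f$ is strictly increasing on $(0,1)$, because each summand $2^{-\abs{p_i}/T}$ is strictly increasing in $T$ and this property is preserved under passage to the limit. The four hypotheses of Theorem~\ref{gfpwCTr} must then be checked for this choice of $(f,g)$.

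Conditions (i) and (ii) are essentially bookkeeping. Condition (i) is the very definition of $Z(T)$. For condition (ii), $Z_k(r)$ is a finite sum of computable functions of a rational $r$, so $Z_k(r)$ is left-computable uniformly in $(r,k)$, whence $\{(q,r,k)\in\Q\times(\Q\cap(0,1))\times\N^+\mid q<Z_k(r)\}$ is r.e. Condition (iv) is the cleanest of the four, reducing to the exact identity
\begin{equation*}
g(T,k+1)-g(T,k)=2^{-\abs{p_{k+1}}/T},
\end{equation*}
which permits the choice $b=0$ and $k_1=1$.

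The only step requiring genuine work is condition (iii), the uniform-in-$k$ upper bound $g(x,k)-g(T,k)\le 2^a(x-T)$ for $x$ in some interval $(T,t)\subset(0,1)$. Applying the mean value theorem to the smooth function $Z_k$, this reduces to bounding the derivative $Z_k'$ uniformly in $k$ on $(T,t)$. Using the closed form $Z_k'(\xi)=(\ln 2/\xi^2)\,W_k(\xi)$ recorded in \eqref{Z'kT}, together with the monotonicity of $W_k(\xi)$ in $k$ and the existence and finiteness of $W(t)=\lim_{k\to\infty}W_k(t)$ on $(0,1)$ guaranteed by Lemma~\ref{uniform}(i), one obtains $Z_k'(\xi)\le(\ln 2/T^2)\,W(t)$ for every $k\in\N^+$ and every $\xi\in(T,t)$. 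Any integer $a$ with $2^a$ dominating this constant then yields condition (iii).

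I expect condition (iii) to be the only genuine obstacle, and even this is almost immediate thanks to the pre-packaged Lemma~\ref{uniform}. Once all four hypotheses of Theorem~\ref{gfpwCTr} are in place, applying Theorem~\ref{gfpwCTr} to this $(f,g)$ yields Theorem~\ref{fpwcTr-partition_function} without further argument: right-computability of $f(T)=Z(T)$ forces $T$ to be weakly Chaitin $T$-random.
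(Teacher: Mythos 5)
Your proposal is correct and follows essentially the same route as the paper's own proof in Appendix~\ref{proof-main}: specialize Theorem~\ref{gfpwCTr} to $f(T)=Z(T)$, $g(T,k)=Z_k(T)$, observe the exact identity $g(T,k+1)-g(T,k)=2^{-\abs{p_{k+1}}/T}$ for condition (iv), and verify condition (iii) by the mean value theorem together with the bound $Z_k'(\xi)\le(\ln 2/T^2)W(t)$ from Lemma~\ref{uniform}(i). No substantive difference from the paper's argument.
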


\begin{theorem}\label{fpTc1-partition_function}
For every $T\in(0,1)$,
if $Z(T)$ is right-computable
then $T$ is also right-computable.\qed
\end{theorem}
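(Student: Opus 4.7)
The plan is to apply the general result Theorem~\ref{gfpTc1} to $f(T) = Z(T)$ with the approximant $g(T,k) = Z_k(T)$; the conclusion of that theorem gives precisely the desired implication from right-computability of $Z(T)$ to right-computability of $T$. First I would check that $f$ is increasing on $(0,1)$. For each fixed $i \in \N^+$ the function $T \mapsto 2^{-\abs{p_i}/T}$ is strictly increasing on $(0,1)$, so $Z(T)$, being a limit of strictly increasing sums of such terms, is strictly increasing.

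Next I would verify the three hypotheses of Theorem~\ref{gfpTc1}. Condition (i), that $\lim_{k\to\infty} g(T,k) = f(T)$ for every $T\in(0,1)$, is immediate from Definition~\ref{tdqait}(i). Condition (ii) in fact holds trivially with $k_0 = 1$ on every subinterval, because $\{Z_k(x)\}_k$ is a strictly increasing sequence of partial sums of a positive-term series, so $g(x,k) = Z_k(x) < Z(x) = f(x)$ for every $k \in \N^+$ and every $x \in (0,1)$. Condition (iii) amounts to the uniform left-computability of $Z_k(r)$ in $(r,k) \in (\Q \cap (0,1)) \times \N^+$: since $\{p_i\}$ is a recursive enumeration of $\Dom U$ the map $i \mapsto \abs{p_i}$ is total recursive, and for any rational $r\in(0,1)$ and natural number $n$ the real $2^{-n/r}$ is left-computable uniformly in $(n,r)$. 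Hence the finite sum $Z_k(r) = \sum_{i=1}^k 2^{-\abs{p_i}/r}$ is left-computable uniformly in $(r,k)$, so $\{(q,r,k) \mid q < Z_k(r)\}$ is r.e.

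With $f$ increasing and conditions (i)--(iii) verified, Theorem~\ref{gfpTc1} applies directly and yields the claim. The only step requiring any genuine care is condition (iii); the rest is unpacking of definitions and positivity of summands. The essential content of the argument lives in the proof of Theorem~\ref{gfpTc1}, which is exactly why that abstraction was isolated — the specialization to $(f,g) = (Z,Z_k)$ is by design routine, and I do not anticipate an obstacle beyond the uniform-left-computability bookkeeping for condition (iii).
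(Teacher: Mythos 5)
Your proposal is correct and follows essentially the same route as the paper: the appendix proof of Theorem~\ref{main} likewise specializes Theorem~\ref{gfpTc1} to $f(T)=Z(T)$, $g(T,k)=Z_k(T)$, using monotonicity of $Z$ and the positivity of the summands to get $Z_k(x)<Z(x)$. Your explicit verification of the r.e.\ condition (iii) via uniform left-computability of the finite sums $Z_k(r)$ is the detail the paper leaves implicit, and it is handled correctly.
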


\begin{theorem}\label{fpTc2-partition_function}
For every $T\in(0,1)$,
if $Z(T)$ is left-computable and $T$ is right-computable,
then $T$ is $T$-compressible.\qed
\end{theorem}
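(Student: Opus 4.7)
The plan is to apply Theorem~\ref{gfpTc2} with $f(T)=Z(T)$ and $g(T,k)=Z_{k}(T)$, mirroring the strategy used in Section~\ref{proof} for free energy, energy, and entropy. Once the six conditions of Theorem~\ref{gfpTc2} are verified for this choice, the conclusion follows directly: left-computability of $Z(T)$ together with right-computability of $T$ yields $T$-compressibility of $T$. Compared with the free-energy proof in Section~\ref{proof-fpt-free_energy}, the verification for $Z$ is noticeably lighter because the increments $g(T,k+1)-g(T,k)$ admit an exact closed form, bypassing the mean-value argument needed there.

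Conditions (i) and (v) are immediate from the definition $Z_{k}(T)=\sum_{i=1}^{k}2^{-\abs{p_i}/T}$, a finite sum of positive, continuous functions of $T$ with limit $Z(T)$. For (vi), the value $Z_{k}(r)$ is uniformly computable from rational $r\in(0,1)$ and $k\in\N^{+}$, so given $(q,r,k)$ one enumerates the triple as soon as a rational upper approximation of $Z_{k}(r)$ drops below $q$. Condition (ii) is equally direct, since
\begin{equation*}
  Z_{k+1}(T)-Z_{k}(T)=2^{-\abs{p_{k+1}}/T}>0,
\end{equation*}
so $Z_{k}(T)$ strictly increases in $k$ to its limit $Z(T)$. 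The same identity makes (iv) trivial with $b=0$ and $c=0$.

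The one step requiring a little care is condition (iii), the uniform-in-$k$ linear lower bound on $g(x,k)-g(T,k)$ for $x$ slightly larger than $T$. For this I would invoke the mean value theorem together with the formula $Z_{k}'(T)=(\ln 2/T^{2})\,W_{k}(T)$ from the proof of Theorem~\ref{tdr}. Fix $t\in(T,1)$. For any $k\in\N^{+}$ and any $x\in(T,t)$, there exists $\xi\in(T,x)$ with
\begin{equation*}
  Z_{k}(x)-Z_{k}(T)=\frac{\ln 2}{\xi^{2}}\,W_{k}(\xi)\,(x-T)\ge\frac{\ln 2}{t^{2}}\,W_{1}(T)\,(x-T),
\end{equation*}
using that $W_{k}(\xi)$ is increasing in $k$ (positive terms) and in $\xi$ (each $2^{-\abs{p_i}/\xi}$ is increasing in $\xi$), together with $W_{1}(T)=\abs{p_{1}}\,2^{-\abs{p_{1}}/T}>0$. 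Choosing $a\in\N$ with $2^{-a}\le(\ln 2/t^{2})W_{1}(T)$ verifies (iii).

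The main obstacle is conceptual rather than computational: the machinery that actually extracts a short description of $T_n$ from approximations to $T$ and $Z(T)$—via the auxiliary weakly Chaitin $u$-random real $\beta=\sum_{i}\abs{p_i}^{b}2^{-\abs{p_i}/u}$ and the bound $H(T_n)\le Tn/u+o(n)$—is already encapsulated inside Theorem~\ref{gfpTc2}. Thus the present proof reduces to the structural verification above; among the six conditions, (iii) is the most delicate, but the explicit derivative formula from Theorem~\ref{tdr}(i) and the trivial positivity of $W_{1}(T)$ make it routine.
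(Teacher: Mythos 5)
Your proposal is correct and follows essentially the same route as the paper's own proof in Appendix~\ref{proof-main}: apply Theorem~\ref{gfpTc2} with $f(T)=Z(T)$, $g(T,k)=Z_k(T)$, note that conditions (i), (ii), (iv), (v), (vi) are immediate from the exact increment $2^{-\abs{p_{k+1}}/T}$, and verify (iii) via the mean value theorem with $Z_k'=(\ln 2/T^2)W_k(T)$ bounded below by $(\ln 2/t^2)W_1(T)>0$. The paper's verification of (iii) is the same computation.
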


Theorem~\ref{main} follows immediately from
these three theorems.

\end{document}